\documentclass[runningheads,a4paper]{llncs}

\usepackage{enumerate,amsmath,amssymb,graphicx}
\usepackage{fancyvrb}
\usepackage{epsfig}
\usepackage{tikz,xspace,multicol}
\usetikzlibrary{positioning, fit, calc}
\usepackage[colorlinks=true]{hyperref}%
\usepackage{cleveref}
\usepackage{enumitem}
\usepackage{algpseudocode,algorithm}

\usepackage{thmtools,thm-restate}

\usepackage{amssymb}
\usepackage{tikz,xspace,multicol,pgfplots}
\pgfplotsset{compat=newest}

\usetikzlibrary{positioning,shapes,arrows,automata}
\tikzset{ state/.style={draw,ellipse,initial text=} }

\tikzset{
  loop above right/.style={above right, out= 60, in= 30, loop},
  loop above left/.style ={above left,  out=150, in=120, loop},
  loop below right/.style={below right, out=330, in=300, loop},
  loop below left/.style ={below left,  out=240, in=210, loop}
}

\usetikzlibrary{positioning,intersections,shapes,arrows,automata}
\tikzset{ state/.style={draw,ellipse,initial text=} }

\tikzset{
  loop above right/.style={above right, out= 60, in= 30, loop},
  loop above left/.style ={above left,  out=150, in=120, loop},
  loop below right/.style={below right, out=330, in=300, loop},
  loop below left/.style ={below left,  out=240, in=210, loop}
}

\renewcommand{\Game}{\mathcal{G}}
\newcommand{\psmas}{PSMAS\xspace}

\newcommand{\rpatl}	{{\sf PATL-SR}\xspace}

\newcommand{\ie}    	{i.e., }

\newcommand{\wrt}   	{w.r.t.\ }
\newcommand{\true}   	{{\mathtt{true}}}

\newcommand{\prob}   	{{\sf {Prob}}} 

\newcommand{\dist}   	{{\sf{Dist}}}

\newcommand{\rew}   	{{\sf{rew}}}

\def\F{\Diamond} 
\def\G{\Box} 
\def\P{\mathbf{P}}
\def\R{\mathbf{R}}
\def\D{\mathbf{D}}
\def\U{\mathbf{U}}

\def\X{\bigcirc}

\newcommand{\HIST}[2]{\mathsf{Hist}_{#2}(#1)}

\newcommand{\ACT}{\mathsf{Act}}
\newcommand{\AG}{\mathsf{Ag}}
\newcommand{\ASP}{\mathsf{Ap}}

\newcommand{\payoff}{\wp}
\newcommand{\BR}{\mathbf{BR}}

\newcommand{\BCR} {{\mathsf{bCR} }}

\newcommand{\init} {\textit{\small init}}
\newcommand{\crash} {\textit{\small crash}}

\newcommand{\pass} {\textit{\small pass}}

\newcommand{\FS}{\rightarrow}
\newcommand{\CAL}[1]{\mathcal{#1}}
\newcommand{\EEE}{\mathbb{E}}
\newcommand{\DDD}{\mathbb{D}}
\newcommand{\FFF}{\CAL{F}}

\newcommand{\MMM}{\CAL{M}}

\newcommand{\NNN}{\mathbb{N}}
\newcommand{\PPP}{\mathbb{P}}
\newcommand{\RRR}{\mathbb{R}}

\newcommand{\TRANS}[1]{\stackrel{#1}{\longrightarrow}}
\newenvironment{GRAMMAR}{\[\begin{array}{lcl}}{\end{array}\]}
\newcommand{\VERTICAL}{\  \mid\hspace{-3.0pt}\mid \ }

\urldef{\mailsa}\path|Chunyan.Mu@abdn.ac.uk, M.Najib@hw.ac.uk|
\newcommand{\keywords}[1]{\par\addvspace\baselineskip
\noindent\keywordname\enspace\ignorespaces#1}

\newcommand{\jact}{\vec{a}}

\newcommand{\strElm}{\sigma}

\newcommand{\strpElm}{\vec{\sigma}}

\newcommand{\aval}{\Upsilon}

\begin{document}

\title{Counterfactual Reasoning for 
\\Causal Responsibility Attribution in\\
Probabilistic Multi-Agent Systems}
\titlerunning{Counterfactual Reasoning for CR Attribution\\ in Probabilistic MASs}
 \author{Chunyan Mu$^1$
\and
Muhammad Najib$^2$}
\authorrunning{Mu \and Najib}
 \institute{$^1$University of Aberdeen \quad
$^2$Heriot-Watt University \\
\mailsa}

\maketitle

\begin{abstract}
    Responsibility allocation---determining the extent to which agents are accountable for outcomes---is a fundamental challenge in the design and analysis of multi-agent systems. In this work, we model such systems as concurrent stochastic multi-player games and introduce a notion of retrospective (backward) counterfactual responsibility, which quantifies an agent's accountability for outcomes resulting from a given strategy profile. To allocate responsibility among agents, we utilise the Shapley value and formally show that this method satisfies key desirable properties, including fairness and consistency.
    Building on this foundation, we propose a formal framework that supports both verification and strategic reasoning in responsibility-aware multi-agent systems. Furthermore, by adopting Nash equilibrium as the solution concept, we demonstrate how to compute stable strategy profiles in which agents trade off responsibility against expected reward.
\end{abstract}

\keywords{Formal verification, Responsibility Attribution, Logic, MASs}

\section{Introduction}
\label{sec:intro}

The growing emphasis on \textit{trustworthy AI}~\cite{chatila2021trustworthy,li2023trustworthy} necessitates a paradigm shift, particularly regarding \textit{responsibility} and \textit{accountability}~\cite{dignum2020responsibility,kaur2022trustworthy,belle2024trustworthiness}. 
To enable responsible behaviour in agents, they must be endowed with an awareness of responsibility. 
This motivated the notion of \textit{responsibility-aware agents}~\cite{yazdanpanah2023reasoning,kobayashi2023formal}, which are assumed to consider not only their own rewards/payoffs but also the broader implications of their actions. In MAS setting, where agents interact and may collaborate, the problem of responsibility attribution becomes increasingly important. When multiple agents jointly bring about an outcome, how can we fairly attribute responsibility to each agent, taking into account their respective capability? To illustrate this challenge, consider the following example.
\begin{example}
\label{eg:case}
Consider two autonomous vehicles, $ A_1 $ and $ A_2 $, approaching an uncontrolled junction under snowy conditions. $ A_1 $ is travelling from east to west, $ A_2 $ from south to north. Each vehicle can either brake or not brake. If neither brakes, they will crash into each other. If both successfully brake (and stop), they will avoid a collision. However, due to the slippery road conditions, there is a $ 0.2 $ chance that $ A_1 $ will not be able to stop, and a $ 0.6 $ chance that $ A_2 $ will not be able to stop. Each vehicle aims to minimise its travel time to reach its destination. However, a crash would result in a significant cost.
\end{example} 
Suppose the outcome is a collision resulting from the agents' chosen actions (or strategies), \textit{which vehicle should bear greater responsibility for the collision?} Naturally, this depends on the strategies adopted during the interaction. However, responsibility may also hinge on the agents' \textit{capabilities}~\cite{gerstenberg2011blame}.
To illustrate, suppose that $ A_2 $ can guarantee stopping with certainty. If both agents adopt the same mixed strategy that includes a non-zero probability of not braking, and a collision occurs, it seems reasonable to assign greater responsibility to $ A_2 $, as it had the capability to avoid the collision entirely. 
Moreover, there is an inherent trade-off between minimising travel time and avoiding collisions. This creates a strategic decision-making problem in which agents must weigh their utility (e.g., travel time) against potential responsibility penalties for a collision. 
Given these considerations, some important questions arise: \textit{Given the strategy profiles and capabilities, how should responsibility be allocated?}, \textit{If the agents are aware of responsibility attribution, how should they choose their strategies?}

In this paper, we aim to such questions. First, we study responsibility attribution through the lens of \textit{counterfactual reasoning}, which evaluates an agent's contribution by asking how the outcome would change if the agent had acted differently. 
We follow well-established accounts of causation in philosophy and AI, where counterfactuals capture ``\textit{what would have happened otherwise}'' \cite{ChocklerH//:04,lewis2013counterfactuals,pearl2009causality}. 
We then examine strategic reasoning for responsibility-aware agents---agents that weigh not only their expected rewards but also the potential responsibility they may incur. This perspective provides a framework for addressing the earlier question of (strategic) \textit{forward reasoning}: how such agents should choose their strategies.

\paragraph{Contribution}
Our main \textit{contribution} is an approach for responsibility attribution in probabilistic multi-agent systems. Firstly, we introduce a formal framework based on the \textit{Shapley value}~\cite{Shapley//:52} and show that the resulting allocations satisfy desirable properties such as \textit{fairness} (responsibility reflects contribution) and \textit{consistency} (an agent's share does not decrease when its influence increas). Although we do not claim this to be a definitive account of responsibility, the framework is interpretable, and based on a concept used widely across disciplines. Our focus is not on formalising \textit{moral} responsibility, an extremely complex area with no clear consensus, nevertheless, we believe that our approach can contribute toward such a formalisation.

Secondly, we study strategic reasoning in the setting where agents employ \textit{randomised memoryless strategy} (aka \textit{behavioural}~\cite{cristau2010we}), i.e., functions that map from state to a probability distribution over actions. Following the tradition of formal method in MAS, we extend the logic PATL with \textit{cumulative reward} (similar to rPATL~\cite{KwiatkowskaGPS//:19}) and introduce \textit{degree of responsibility operators}. We show that model checking this logic remains in PSPACE, thus no harder than rPATL \cite{KwiatkowskaNPS//:21}. Finally, we present a method for computing Nash equilibrium strategy profiles where agents' utility functions incorporate both reward and responsibility allocation. We show that when agents' temporal objectives are specified as either \textit{reachability} or \textit{safety} conditions, these equilibria can also be computed in PSPACE.

\paragraph{Related work}
Research on responsibility and attribution spans a large body of work; here, we highlight some relevant studies and key distinctions.
Responsibility attribution is closely tied to causality and counterfactual reasoning. Foundational work by Halpern and Pearl~\cite{halpern2005causes} introduced structural causal models, which underpin formal definitions of causality and counterfactual dependence. Building on this, \cite{ChocklerH//:04} proposed a quantitative notion of responsibility, which later extended to a notion of \textit{blame attribution}~\cite{HalpernK//:18}. While conceptually related to our approach, their framework relies on directed acyclic graphs and is primarily suited for reasoning about beliefs, intentions, and epistemic states, whereas our work focuses on temporally extended properties in strategic settings. 

The Shapley value has also been applied for responsibility allocation. \cite{FriedenbergH//:19} extends the structural-model approach \cite{HalpernK//:18} by incorporating the Shapley value to distribute blame, and demonstrating that the allocation scheme satisfies desirable properties. As previously discussed, their model differs from ours and is suited for different purposes. Similarly, \cite{YazdanpanahDJAL//:19} apply Shapley-based responsibility allocation in concurrent epistemic games, enabling epistemic modeling but under deterministic assumptions, in contrast to our probabilistic setting.
\cite{Baier0M//:21b} studies backward and forward responsibility in \textit{extensive-form games}, while \cite{Baier0M//:21a} employs parametric Markov chains (pMCs), which are closely related to our model. Memoryless strategy profiles naturally correspond to parameters assignements in a pMC, although the converse does not always hold\footnote{Some parameter assignments cannot be expressed as individual strategies; see \cite[Remark 8]{stan2024concurrent}. We address this by imposing additional constraints in our parametric model (\Cref{sec:resp-aware}).}. 
Furthermore, \cite{Baier0M//:21a} primarily focuses on attributing changes to \textit{parameters}, with its main concern being how variations in temporal satisfaction degrees should be attributed to changes in parameter assignments. In contrast, our work assigns responsibility to individual players.

Closer to our work is \cite{mu2025responsibility}, which uses concurrent stochastic games to reason about \textit{causal active} and \textit{causal passive} responsibilities, building on \cite{ParkerGL//:23}. Their approach considers how strategies contribute to or prevent outcomes but does not employ counterfactual reasoning nor guarantee fair attribution. From temporal-logic perspective, \cite{de2025responsibility} uses LTL$_f$ to represent outcomes---similar to ours---but focuses on a single agent from a first-person perspective, whereas our approach adopts a third-person, multi-agent view. Other related works include Naumov and Tao's logical frameworks for blameworthiness and ``seeing-to-it'' responsibility~\cite{naumov2019blameworthiness,NaumovT//:21}, as well as STIT‑based accounts~\cite{ciuni2009attributing,de2010logic,baltag2021causal,abarca2022stit}. These differ from ours in both their underlying models and their conceptions of responsibility. In particular, STIT approaches define responsibility via an agent’s modal ability to “see to it that’’ an outcome occurs, whereas our framework evaluates responsibility through counterfactual dependence---what would happen were an agent to act differently---and seeks a fair quantitative attribution across agents.

\section{Preliminaries}
\label{sec:model}

\subsection{Concurrent stochastic game}
\begin{definition}
\label{def:model}
A \textit{concurrent stochastic multi-player game} (CSG) is a tuple $ \Game = (\AG, S, s^{0}, (\ACT_i)_{i \in \AG}, \delta, \ASP, L )$ where:
\begin{itemize}
	\item $\AG=\{1, \dots, n\}$ is a finite set of \emph{agents};
	\item $S$ is a finite non-empty set of \emph{states};
	\item $s^0 \in S$ is the \emph{initial state};
	\item $ \ACT_i $ is a finite set of actions for $ i $. With each agent $ i $ and state $ s \in S $, we associate a non-empty set $ \ACT_i(s) $ of \textit{available} actions that $ i $ can perform in $ s $. Write $ {\ACT}^{\AG} = \ACT_1 \times \cdots \times \ACT_n $.
	\item $ \Delta: S \times {\ACT}^{\AG} \to \dist(S) $ is a probabilistic transition function;
	\item $\ASP$ is a finite set of \emph{atomic propositions}; 
	
	\item $L: S \FS 2^{\ASP}$ is the labelling function mapping each state to a set of atomic propositions drawn from $\ASP$.
\end{itemize}
\end{definition}
Following~\cite{KwiatkowskaNPS//:21}, we augment CSGs with \textit{reward structures} of the form $r=(r_s, r_a)$, where $r_s: S \to \RRR$ is the state reward and $r_a: \ACT^{\AG} \to \RRR$ is the action reward, and consider \textit{cumulative rewards}, that is the sum of payoffs accumulated during the run until a specific point. 

\begin{figure}[t]
 \begin{center}

\scalebox{0.8}{
	\begin{tikzpicture}[->,>=stealth',shorten >=1pt,auto,node distance=2cm,
		thick,
		main node/.style={circle,draw,font=\Large\bfseries},
		action node/.style={rectangle,draw,font=\Large\bfseries}
		]
		
		\node[main node] (0) {$s_0$};
		\node[action node] (2) [below left=1cm of 0] {};
		\node[action node] (3) [below right=1cm of 0] {};

		\node[main node] (5) [below=1.5cm of 2] {$s_1$};
		\node[action node] (1) [left=2cm of 5] {};
		\node[main node] (6) [below=1.5cm of 3] {$s_2$};
		\node[action node] (4) [right=2cm of 6] {};
%
		\path(0) edge[bend right=40] node [left] {$\bar{b}_1, \bar{b}_2$} (1);
		\path(0) edge node [left] {$\bar{b}_1, b_2$} (2);
		\path(0) edge node [right] {${b}_1, \bar{b}_2$} (3);
		\path(0) edge[bend left=40] node [right] {${b}_1, {b}_2$} (4);
		\path(1) edge node [above] {$ 1 $} (5);
		\path(2) edge node [left] {$ 0.6 $} (5);
		\path(2) edge [bend right=10] node [left] {$ 0.4 $} (6);
		\path(3) edge [bend left=10] node [right] {$ 0.2 $} (5);
		\path(3) edge node [right] {$ 0.8 $} (6);
		\path(4) edge [bend left=50] node [below] {$ 0.12 $} (5);
		\path(4) edge node [above] {$ 0.88 $} (6);
        \path(5) edge [loop below] node {$\ast, \ast$} ();
        \path(6) edge [loop below] node {$\ast, \ast$} ();
	\end{tikzpicture}
}

 \caption{CSG model from the running example.}
 \label{fig:eg-model}
 \end{center}
 \vspace{-0.5cm}
 \end{figure}
\begin{example}
\label{eg:model}
From the running example, we construct a model in \Cref{fig:eg-model}. With $ \AG = \{1,2\} $, representing $ A_1 $ and $ A_2 $. $ \ACT_i = \{b_i, \bar{b}_i\} $ representing agent $ i \in \AG $ choosing to brake and not brake. $ s_0 $ is the initial state, and $ L(s_1) = \{\crash\}, L(s_0) = \{\init\}, L(s_2) = \{\pass\} $. 
Note that from $ s_1 $ and $ s_2 $, there are only self-loops.
\end{example}

\begin{definition} 
\label{def:history}
A \emph{path} $\pi$ is a non-empty sequence $s_0 \jact^0 s_1 \jact^1 \dots $ of states and joint actions, 
where $\jact^i \in \ACT^{\AG}$ is the $i^{th}$ joint action, and
$\Delta(s_i,\jact^i, s_{i+1})>0$. 
A \emph{history} $\rho$ is a finite path,
$\rho_s(i)$ denotes the $i^{th}$ state of $\rho$, and $\rho_{\jact}(i)$ denotes the $i^{th}$ joint action of $\rho$. In this case, we may write $\rho_s(i)\TRANS{\rho_{\jact}(i)} \rho_s(i+1)$.
Let $\HIST{s}{\Game}$ denote the set of histories of $\Game$ starting from state $s$. 
We write $\HIST{s}{\Game}^{\leq k}$ to denote the set of histories up to and including $\rho_{s}(k)$.
\end{definition}

In this work, we assume that players have memoryless strategies. Informally, a memoryless strategy for player $i$ prescribes, from each state $s$, the probability of each action $a \in \ACT_i(s)$ being chosen. 

\begin{definition}
\label{def:strategy}
A \emph{(memoryless) strategy} for $i$ is a function from the set of states to a probability distribution over agent's set of actions $\sigma_i: S \FS \dist(\ACT_i)$.
A \emph{strategy profile} is a tuple of \emph{strategies} for a set of agents, it is denoted by $\vec{\sigma}=(\sigma_1, \sigma_2, \dots, \sigma_n)$. For a set (or \textit{coalition}) of agents $J \subseteq \AG$, write $\vec{\sigma}_J$ for $(\sigma_i)_{i \in J}$.
\end{definition}

Note that a strategy profile $\strpElm$ for a game $\Game$ resolves nondeterminism in the game arena. Denote by $\Game_{\strpElm}$ the resulting game arena (essentially, a Markov chain) after applying $\strpElm$. 

\begin{definition}
\label{def:outcome}
A state $s$ and a strategy profile $\vec{\sigma}$ induce a set of histories $s_0 \jact^0 s_1 \jact^1 \dots $, with $\PPP(\jact^i) \cdot \Delta(s_i,\jact^i,s_{i+1})>0$ and $s = s_0$, denoted by $\HIST{s}{\vec{\sigma}}$.
For a history $ \rho \in \HIST{s}{\vec{\sigma}}^{\leq k} $, the probability of $\rho = s_0 \TRANS{\jact^0} s_1 \dots \TRANS{\jact^{k-1}} s_k$, with $s_0 = s$, is given by:
\[
\PPP(\rho) \triangleq \prod^{k-1}_{j=1} \left( \prod^n_{i=1} \left( \sigma_i (s_j) (\jact^j_i) \right)\right)
\]
\end{definition}

For a subset of agents $J \subseteq \AG$ and strategies $\vec{\sigma}_{J}$, we say that a history $\rho$ is \textit{compatible} with $\vec{\sigma}_{J}$ if for every $k \in \NNN$, there exists a joint action $\jact^k$ with $ \strElm_i(\rho_s(k))(\jact^k_i) > 0 $ for each $ i \in J $, such that $ \Delta(s_k,\jact^k,s_{k+1}) > 0 $. We denote by $ \HIST{s}{\vec{\sigma}_{J}} $ the set of histories starting from $ s $ and compatible with $ \strpElm_{J} $ that can be induced by memoryless strategies.

\begin{definition}
\label{def:payoff}
The \emph{payoff} function defined as a map from a set of histories to a real value
$\payoff:\HIST{s}{\Game} \to \RRR^{|\AG|}$. $\payoff^i$ denotes the payoff function of $i\in \AG$ and is defined as:
$$\payoff^i(\rho) \triangleq \left(\sum^{t-1}_{j=0} (r^i_a(\jact^j)+r^i_s(s_j)) \cdot \Delta(s_j,\jact^j,s_{j+1}) \right)$$
where $\rho = s_0 \TRANS{\jact^0} s_1  \TRANS{\jact^1} \dots  \TRANS{\jact^{t-1}}  s_t \in \HIST{s_0}{\Game}$.
\end{definition}

\subsection{PATL}
PATL~\cite{ChenL//:07} is a logic for reasoning about the strategic abilities of (coalitions of) agents in probabilistic systems. In this paper, we consider PATL with \textit{bounded temporal properties}.

\begin{definition}[Syntax of PATL]
\label{def:ATL*-syntax}
The \emph{state formulae} ($\phi$) and 
\emph{path formulae} ($\psi$) are defined as follows:
\begin{GRAMMAR}
 \phi
     &::=&
  a 
     \VERTICAL
  \neg \phi
     \VERTICAL
  \phi \land \phi
     \VERTICAL
    \langle A \rangle \P_{\bowtie p} \lbrack {\psi} \rbrack
      \\
  \psi
     &::=&
  \X \phi 
     \VERTICAL
  (\phi \U_{\leq k} \phi)
\end{GRAMMAR}
where $a \in \ASP$ is an \emph{atomic proposition},
$A \subseteq \AG$ is a set of agents,
${\bowtie} \in \{\le, <, \ge, >\}$,
$p \in \lbrack 0,1 \rbrack$,
$k \in \mathbb{N}$ is a time bound,
and $\langle A \rangle$ is the strategy quantifier.
\end{definition}
The until operator $\U$ allows one to derive the temporal modalities $\F$ (``eventually'') and $\G$ (``always''): $\F_{\le k} \psi  \triangleq  \true~ \U_{\le k}~ \psi$ and $\G_{\le k} \psi  \triangleq  \neg \F_{\le k} (\neg\psi)$. 

\begin{definition} [Semantics of PATL]
\label{def:PATL-semantics}
The satisfaction relation for a CSG $\Game$, state $s \in S$, history $\rho$, atom $a \in \ASP$, and PATL formula is defined as follows:
\begin{itemize}
\setlength{\itemindent}{-1em}

\item $s \models a$ iff $a \in L(s)$.

\item $s \models \neg\phi$ iff $s \not \models \phi$.

\item $s \models \phi \land \phi'$ iff 
  	$s \models \phi$ and  $s \models  \phi'$.

\item $s \models  \langle A \rangle \P_{\bowtie p} [\psi] $ iff
$\exists \strpElm_A . \forall \strpElm_{-A}. (\PPP ( \{ \rho \in  \HIST{s}{(\strpElm_A,\strpElm_{-A})}  \mid \rho \models \psi \} ) \bowtie p)$.

\item $\rho \models \X \phi$ iff $\rho_s(1) \models \phi$.
  
\item $\rho \models \phi \U_{\le k} \phi'$ iff there exists $i \le k$ such that:
  $\rho_s(i) \models \phi'$, and $\rho_s(j) \models \phi$  for all $j < i$.

\end{itemize}
\end{definition}
We say that formula $\varphi$ is {\em true} in a CGS $\Game$ iff $s^0 \models \varphi$.

\section{Formalising Counterfactual Responsibility Attribution} 
\label{sec:cr-notion}

We build on the concept of \textit{Necessary Element of a Sufficient Set} (NESS) test \cite{wright1985causation}, which has been used in many game-theoretic frameworks (e.g., \cite{BrahamVH//:12,baltag2021causal,ParkerGL//:23}), and adapt it to our setting. 

\begin{definition}
\label{def:bcr}
Given a CSG $\Game$, 
we say that agent $i$ bears backward Counterfactual Responsibility (bCR) in a strategy profile $\vec{\sigma}$ for outcome $\varphi$, specified in a PATL path formula, 
written as $i \sim \BCR^{\vec{\sigma}}_{\varphi}$, if the following conditions hold: 
$\exists J \subseteq \AG \setminus \{i\}$ such that 
	$\exists \rho' \in \HIST{s^0}{\strpElm_{J}}. \rho' \not\models \varphi $ and $ 
	\forall \rho'' \in \HIST{s^0}{\strpElm_{ J \cup \{i\} } }. \rho'' \models \varphi$. 
\end{definition}
In words, agent $i$ is responsible for an outcome $\varphi$ under strategy profile $\strpElm$, if it belongs to some coalition $J \cup \{i\}$ whose joint strategies are sufficient to guarantee $\varphi$, while the strategies of $J$ (without $i$) are not. 
That is, counterfactually, changing $i$'s strategy could have prevented $\varphi$.

In the definition above, bCR is a qualitative notion: $i$ either bears responsibility or does not. Our aim is to extend this to a quantitative setting by determining ``how much'' responsibility agent $i$ bears. Intuitively, this corresponds to measuring $i$'s marginal contribution to increasing the likelihood/probability of satisfying $\varphi$. To this end, we allocate responsibility using the Shapley value. A challenge arises when measuring the marginal contribution of a coalition that does not contain all agents (i.e., non-grand coalition) since the strategies of agents outside that coalition are undefined. We address this by assuming that $\varphi$ is an undesirable outcome that agents (retrospectively) would prefer to avoid. Thus, when computing marginal contributions, we assume that agents outside the coalition act so as to minimise the probability of satisfying $\varphi$\footnote{Although we use ``minimise'' when defining the strategies of agents outside the coalition, one could instead use ```maximise'' when $\varphi$ represents a \textit{desirable} outcome. We conjecture that the properties established in the next section would similarly hold.}.

\begin{definition}
\label{def:bcr-degree}
Given a bound\footnote{This bound comes from the PATL path formula $ \varphi $.} $ k \in \NNN$, the degree of $i$ bearing bCR in $\strpElm$ for $\varphi$ is defined as:
$\DDD^i(\strpElm, \varphi) = \sum_{J \subseteq  \AG \setminus \{i\}} \frac{|J|!(|\AG| - |J| - 1)!}{|\AG|!} \cdot (v_{\strpElm, \varphi}(J \cup \{i\}) - v_{\strpElm, \varphi}(J)) $.
For any $A \subseteq \AG$, the function $ v_{\strpElm, \varphi}(A) $ for the $k$-bounded histories is defined as:
\[ v_{\strpElm, \varphi}(A) = \min_{\strpElm_{-A}} \prob(\{\rho \mid \rho \in \HIST{s^0}{{(\strpElm_{A},\strpElm_{-A})}}^{\leq k}  \land \rho \models \varphi \}) \]
\end{definition}

Intuitively, the definition of $ v_{\strpElm, \varphi}(A) $ assumes that the other players are actively trying to \textit{minimise} the occurrence of $ \varphi $. This reflects the \textit{principle of alternative possibilities} and counterfactual reasoning~\cite{frankfurt2018alternate,belnap1992way,widerker2017moral,shi2025responsibility}: \textit{an agent is considered responsible for an outcome $ \varphi $ under a strategy profile $ \strpElm $ if $ \varphi $ occurs under this strategy profile, and the agent had some alternative strategy that could have prevented $ \varphi $.}

\begin{remark}
    We note that whilst $\min$ is used for calculating $v_{\strpElm, \varphi}(A)$, one may also define it using $\max$ when considering $\varphi$ as a ``desirable'' outcome. We conjecture that the properties in \Cref{sec:desireable-props} would also hold.
\end{remark}

\begin{example}
\label{eg:bcr-degree} 
Continuing the running example, suppose given an outcome $\varphi := \X \crash$ (i.e., crash in the next time-step),
and consider the strategy profile $\strpElm$ corresponding to each agent choosing not to brake with probability of $ 1 $. 
In this case, we obtain:
\begin{align*}
v_{\strpElm, \varphi} (\{1,2\}) &= 1, \qquad v_{\strpElm, \varphi} (\{1\}) = 0.6 \\ 
v_{\strpElm, \varphi} (\{2\}) &= 0.2, \qquad v_{\strpElm, \varphi} (\varnothing) = 0.12 
\end{align*}
Therefore: 
\begin{align*}
\DDD^{1}(\strpElm, \varphi) &= \frac{1}{2} \cdot (0.6 - 0.12) + \frac{1}{2} \cdot (1 - 0.2) = 0.64\\  
\DDD^{2}(\strpElm, \varphi) &= \frac{1}{2} \cdot (0.2 - 0.12) + \frac{1}{2} \cdot (1 - 0.6) = 0.24\\
\end{align*}
\end{example}

Note that $ A_1 $ has a higher degree of responsibility than $ A_2 $, intuitively, because $ A_1 $ has a greater chance of being able to stop (i.e., $0.8$, compared to $ 0.4 $ for $ A_2 $). 

\section{Desirable Properties}
\label{sec:desireable-props}
In this section, we demonstrate the desirable properties of our responsibility attribution: \textit{efficiency}, \textit{symmetry}, \textit{null/dummy player}, \textit{additivity}, and \textit{monotonicity}.

To formalise this, we first introduce the notion of \textit{attributable value}, which represents the total amount of responsibility available for distribution among agents. 
The attributable value for a strategy profile is the difference between the probability of $ \varphi $ occurring under the strategy profile and the minimum probability achievable if agents act optimally (together) to prevent $ \varphi $.

To illustrate, consider an autonomous vehicle agent $i$ approaching a T-junction. It can turn \textit{left} or \textit{right}. Turning left guarantees arriving late (probability 1), while turning right carries a 0.2 probability of lateness. Here, lateness is inevitable with at least a 0.2 probability. Suppose the vehicle adopts a strategy profile of turning left with certainty (probability 1). The attributable value is $ 1-0.2=0.8 $, meaning the maximum responsibility assignable to $ i $ is $ 0.8 $.

\begin{definition}
	Given a strategy profile $ \strpElm $, an outcome $ \varphi $, and a bound $ k \in \NNN $, the attributable value of $\varphi$ \wrt $\strpElm$ is defined as: 
    $\aval(\strpElm,\varphi) = v_{\strpElm,\varphi}(\AG)-\min_{\strpElm} \prob(\{\rho \mid \rho \in \HIST{s^0}{{\strpElm}}^{\leq k}  \land \rho \models \varphi \}) $.
\end{definition}

By \Cref{def:bcr-degree}, we have that: 
$$v_{\strpElm,\varphi}(\varnothing) = \min_{\strpElm_{-\varnothing}} \prob(\{\rho \mid \rho \in \HIST{s^0}{{\strpElm}}^{\leq k}  \land \rho \models \varphi \}$$ 
and since $ \AG \setminus \varnothing = \AG$, therefore $ \strpElm_{\AG} = \strpElm $, and we immediately obtain the following lemma.

\begin{lemma}
\label{lem:min}
    For a strategy profile $\strpElm$ and outcome $\varphi$, it holds: 
    $$ \aval(\strpElm,\varphi) = v_{\strpElm,\varphi}(\AG) - v_{\strpElm,\varphi}(\varnothing) $$
\end{lemma}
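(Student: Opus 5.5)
The plan is to unfold both definitions and match them term by term; no earlier result is needed beyond \Cref{def:bcr-degree} and the definition of $\aval$. By definition,
\[
\aval(\strpElm,\varphi) = v_{\strpElm,\varphi}(\AG) - \min_{\strpElm'} \prob(\{\rho \mid \rho \in \HIST{s^0}{\strpElm'}^{\leq k} \land \rho \models \varphi\}).
\]
Here I rename the minimisation variable to $\strpElm'$, so that it is not confused with the fixed profile $\strpElm$ appearing in $v_{\strpElm,\varphi}(\AG)$. The first term already matches the right-hand side of the lemma. It therefore suffices to show that the subtracted quantity equals $v_{\strpElm,\varphi}(\varnothing)$.

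Instantiate \Cref{def:bcr-degree} with $A=\varnothing$. The coalition component $\strpElm_{\varnothing}$ is the empty tuple, so the pair $(\strpElm_{\varnothing},\strpElm_{-\varnothing})$ is just $\strpElm_{-\varnothing}$. Since $\AG\setminus\varnothing=\AG$, this ranges over all full profiles of memoryless strategies. Hence
\[
v_{\strpElm,\varphi}(\varnothing) = \min_{\strpElm'} \prob(\{\rho \mid \rho \in \HIST{s^0}{\strpElm'}^{\leq k} \land \rho \models \varphi\}),
\]
where $\strpElm'$ ranges over exactly the same set of profiles as in the definition of $\aval$. In particular, the value of $v_{\strpElm,\varphi}(\varnothing)$ does not depend on the given $\strpElm$ at all. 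The two minima are syntactically the same expression, and substituting one for the other yields the claimed identity.

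The main point to be careful about is the domain of the minimum, not any computation. Both expressions must range over the same class, namely memoryless profiles for all agents. Both must use the same bound $k$ taken from $\varphi$. I would also check that the minimum is attained rather than merely an infimum. For bounded $k$ the probability is a polynomial in the finitely many strategy parameters $\sigma_i(s)(a)$, which range over a compact product of simplices, so the minimum exists. Even without this, both sides would be the same infimum, so the identity would still hold.
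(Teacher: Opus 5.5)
Your proposal is correct and follows the same route as the paper. You instantiate the definition of $v_{\strpElm,\varphi}$ at $A=\varnothing$, note that $\AG\setminus\varnothing=\AG$ so the minimisation ranges over all full profiles, and identify the result with the subtracted term in $\aval(\strpElm,\varphi)$; your renaming of the bound variable and your remark on attainment of the minimum are careful additions that the paper omits, but they do not change the argument.
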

\begin{proof}
From the definition of attributable value and bCR degree, we have that: 
$$v_{\strpElm,\varphi}(\varnothing) = \min_{\strpElm_{-\varnothing}} \prob(\{\rho \mid \rho \in \HIST{s^0}{{\strpElm}}^{\leq k}  \land \rho \models \varphi \}$$ 
and since $ \AG \setminus \varnothing = \AG$, therefore $ \strpElm_{\AG} = \strpElm $, and we immediately obtain the lemma.
\end{proof}

\begin{example}
\label{eg:attributable}
Consider \Cref{eg:bcr-degree}, the attributable value is: 
\[\aval(\strpElm,\varphi) = v_{\strpElm,\varphi}(\AG) - v_{\strpElm,\varphi}(\varnothing) = 1-0.12=0.88\]
\end{example}

\paragraph{Fairness}
The first three properties relate to \textit{fairness}. \textit{Efficiency} guarantees that the sum of assigned responsibility values equals the attributable value, ensuring no unassigned responsibility remains. Conversely, it also ensures that the distributed responsibility does not exceed the attributable value, preventing any agent from bearing excess responsibility. \textit{Symmetry} ensures that two agents with identical contributions to an outcome receive equal degrees of responsibility. The \textit{dummy/null player} property ensures that agents who do not contribute to an outcome receive zero responsibility.

\begin{proposition}[Efficiency]
    For a strategy profile $\strpElm$ and outcome $\varphi$, it holds that: 
    $\sum_{i \in \AG} \DDD^{i}(\strpElm,\varphi) = \aval(\strpElm,\varphi).$
\end{proposition}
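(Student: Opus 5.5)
The plan is to use the standard telescoping argument for the efficiency of the Shapley value, and then close with \Cref{lem:min}. Fix $\strpElm$ and $\varphi$, and write $v$ for $v_{\strpElm,\varphi}$ and $n = |\AG|$. Because $\AG$ is finite, $v$ is a well-defined real-valued function on $2^{\AG}$. The only property of $v$ the argument needs is that it is a set function. Its particular form (a minimum over the strategies of the complement coalition) plays no role beyond \Cref{lem:min}.

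\textbf{Step 1: rewrite the weights as orderings.} Let $\Pi$ be the set of all $n!$ orderings of $\AG$. For $\pi \in \Pi$, let $P^\pi_i$ be the set of agents preceding $i$ in $\pi$. For a fixed $J \subseteq \AG\setminus\{i\}$, the number of orderings with $P^\pi_i = J$ is $|J|!\,(n-|J|-1)!$: the agents of $J$ are arranged first, then $i$, then the remaining agents. Therefore
\[
\DDD^i(\strpElm,\varphi) = \frac{1}{n!}\sum_{\pi\in\Pi}\bigl(v(P^\pi_i\cup\{i\}) - v(P^\pi_i)\bigr).
\]

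\textbf{Step 2: telescope.} Sum the identity from Step 1 over $i \in \AG$ and exchange the two finite sums. For a fixed ordering $\pi = (i_1,\dots,i_n)$, the inner sum is
\[
\sum_{j=1}^{n}\bigl(v(\{i_1,\dots,i_j\}) - v(\{i_1,\dots,i_{j-1}\})\bigr) = v(\AG) - v(\varnothing).
\]
Averaging this constant over all $n!$ orderings gives $\sum_i \DDD^i(\strpElm,\varphi) = v(\AG) - v(\varnothing)$.

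\textbf{Step 3: conclude.} By \Cref{lem:min}, $v(\AG) - v(\varnothing) = \aval(\strpElm,\varphi)$, which is the claim.

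The only step needing care is the counting in Step 1, that is, the bijection between orderings and pairs consisting of a predecessor set and arrangements of it and of the remaining agents. Once that is in place, the rest is a direct telescoping computation. No game-theoretic facts are used apart from \Cref{lem:min}.
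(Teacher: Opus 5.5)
Your proof is correct, but it takes a different route from the paper.

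\textbf{The paper's route.} The paper stays in the subset form. It swaps the sums over $i$ and $J$ and factors out the weight $w(J)=\frac{|J|!(n-|J|-1)!}{n!}$, with $n=|\AG|$. It then asserts two identities: that for each $J$ the inner sum $\sum_{i\in\AG\setminus J}\bigl(v(J\cup\{i\})-v(J)\bigr)$ ``telescopes'' to $v(\AG)-v(\varnothing)$, and that $\sum_{J\subseteq\AG}w(J)=1$. Neither holds as written. For $n=2$ and $J=\varnothing$, the inner sum is $v(\{1\})+v(\{2\})-2v(\varnothing)$. The weights of $\varnothing,\{1\},\{2\}$ are each $\tfrac12$. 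In general the weights of the proper subsets sum to $\sum_{m=1}^{n}\frac1m$, and $w(\AG)$ involves $(-1)!$.

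\textbf{Your route.} The cancellation the paper is reaching for does not happen subset by subset. It happens along each ordering, which is exactly where you perform it, so your argument is a sound replacement. Your counting in Step 1 also justifies the ordering form of the Shapley value, which the paper uses without proof in its Symmetry argument.

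\textbf{Repairing the subset form.} If you wanted to stay in the subset form, the right bookkeeping is to collect the coefficient of each $v(S)$ in $\sum_i\DDD^i(\strpElm,\varphi)$. For $\varnothing\neq S\neq\AG$ with $s=|S|$, that coefficient is $s\cdot\frac{(s-1)!(n-s)!}{n!}-(n-s)\cdot\frac{s!(n-s-1)!}{n!}=0$. The term $v(\AG)$ appears only with total coefficient $n\cdot\frac{(n-1)!}{n!}=1$, and $v(\varnothing)$ only with $-1$. Your ordering argument reaches the same conclusion with less computation.
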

\begin{proof}

\begin{align*}
\sum_{i \in \AG} \DDD^{i} (\strpElm,\varphi) &= \sum_{i \in \AG}  \sum_{J \subseteq  \AG \setminus \{i\}} \frac{|J|!(|\AG| - |J| - 1)!}{|\AG|!} \cdot (v_{\strpElm, \varphi}(J \cup \{i\}) - v_{\strpElm, \varphi}(J)) \\
&= \sum_{J \subseteq \AG}  \sum_{i \in \AG \setminus J} \frac{|J|!(|\AG| - |J| - 1)!}{|\AG|!} \cdot  (v_{\strpElm, \varphi}(J \cup \{i\}) - v_{\strpElm, \varphi}(J))
\end{align*}
Since for each subset $ J $ the term $ \frac{|J|!(|\AG| - |J| - 1)!}{|\AG|!} $ is the same for each player $ i $ not in $ J $, and there are $ |\AG| - |J| $ such players, thus we can simplify the inner sum:
\begin{align*}
\sum_{i \in \AG} \DDD^{i} (\strpElm,\varphi)
&= \sum_{J \subseteq \AG} \frac{|J|!(|\AG| - |J| - 1)!}{|\AG|!} \sum_{i \in \AG \setminus J} 
 (v_{\strpElm, \varphi}(J \cup \{i\}) - v_{\strpElm, \varphi}(J))
\end{align*}
Observe that the expression $ \sum_{i \in \AG \setminus J} (v_{\strpElm, \varphi}(J \cup \{i\}) - v_{\strpElm, \varphi}(J)) $ is a \textit{telescoping series/sum} with respect to all subsets $ J $, thus we can simplify the inner sum further and obtain:
$\sum_{i \in \AG} \DDD^{i} (\strpElm,\varphi)
	= \sum_{J \subseteq \AG} \frac{|J|!(|\AG| - |J| - 1)!}{|\AG|!} \cdot 
    \left( v_{\strpElm, \varphi}(\AG) - v_{\strpElm, \varphi}(\varnothing) \right)$.
Since $ \sum_{J \subseteq \AG} \frac{|J|!(|\AG| - |J| - 1)!}{|\AG|!} = 1 $, we have:
\begin{align*}
\sum_{i \in \AG} \DDD^{i} (\strpElm,\varphi) &= v_{\strpElm, \varphi}(\AG) - v_{\strpElm, \varphi}(\varnothing).
\end{align*}

By \Cref{lem:min}, we obtain:
\begin{align*}
\sum_{i \in \AG} \DDD^{i} (\strpElm,\varphi) &= \aval(\strpElm,\varphi).
\end{align*}
\end{proof}

\begin{example}
\label{eg:efficiency}
Consider again \Cref{eg:bcr-degree} and \ref{eg:attributable}, note that:
\[\sum_{i \in \AG} \DDD^{i} (\strpElm,\varphi) = 0.64+0.24 = \aval(\strpElm,\varphi) = 0.88\]
\end{example}

Symmetry ensures that two agents with identical contributions towards an outcome will receive equal degrees of individual responsibility.

\begin{proposition}[Symmetry]
For two agents $ i,j  \in \AG$, if:
\[\forall J \subseteq \AG \setminus \{i,j\}, v_{\strpElm, \varphi} (J \cup \{i\}) = v_{\strpElm, \varphi} (J \cup \{j\})\]
then:
\[\DDD^{i}(\strpElm,\varphi) = \DDD^{j}(\strpElm,\varphi)\]
\end{proposition}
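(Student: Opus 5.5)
The plan is the standard Shapley symmetry argument. It uses only the combinatorial form of $\DDD^i$ in \Cref{def:bcr-degree}; no property of $v_{\strpElm,\varphi}$ is needed beyond the hypothesis. Write $n = |\AG|$ and $w(J) = \frac{|J|!(n-|J|-1)!}{n!}$, so $w$ depends only on $|J|$. The case $i=j$ is trivial, so assume $i \neq j$.

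\emph{Step 1 (split the sum).} Split the sum defining $\DDD^i(\strpElm,\varphi)$ over $J \subseteq \AG\setminus\{i\}$ into two parts. The first part contains the sets $J$ with $j \notin J$, i.e. $J \subseteq \AG\setminus\{i,j\}$. The second part contains the sets $J = K \cup \{j\}$ with $K \subseteq \AG\setminus\{i,j\}$. Do the same for $\DDD^j$ with the roles of $i$ and $j$ swapped.

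\emph{Step 2 (first part).} For $J \subseteq \AG\setminus\{i,j\}$, the hypothesis gives $v(J\cup\{i\}) = v(J\cup\{j\})$. Hence the marginal contributions $v(J\cup\{i\}) - v(J)$ and $v(J\cup\{j\}) - v(J)$ coincide. The weights $w(J)$ are identical, so the first parts of $\DDD^i$ and $\DDD^j$ agree.

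\emph{Step 3 (second part).} For $J = K\cup\{j\}$ in $\DDD^i$, the term is
\[
w(K\cup\{j\})\bigl(v(K\cup\{i,j\}) - v(K\cup\{j\})\bigr).
\]
The corresponding term in $\DDD^j$, with $J = K\cup\{i\}$, is
\[
w(K\cup\{i\})\bigl(v(K\cup\{i,j\}) - v(K\cup\{i\})\bigr).
\]
The weights are equal because $|K\cup\{i\}| = |K\cup\{j\}|$. By the hypothesis applied to $K$, $v(K\cup\{i\}) = v(K\cup\{j\})$, so the two terms are equal. Summing over $K$ shows the second parts also agree, and hence $\DDD^i(\strpElm,\varphi) = \DDD^j(\strpElm,\varphi)$.

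The only delicate point is the bookkeeping in Step 3. One must check that $K \mapsto K\cup\{j\}$ and $K \mapsto K\cup\{i\}$ are bijections from subsets of $\AG\setminus\{i,j\}$ onto the sets in the second part of each sum, and that the weights match. The hypothesis is used twice, once on $J$ in Step 2 and once on $K$ in Step 3; no game-theoretic property of the $\min$ over $\strpElm_{-A}$ is needed.
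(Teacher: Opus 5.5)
Your proof is correct, but it takes a different route from the paper's.

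The paper first rewrites $\DDD^i(\strpElm,\varphi)$ in the permutation (random-order) form of the Shapley value. It then defines a bijection $f$ on orderings of $\AG$ that swaps $i$ and $j$. It shows that the marginal contribution of $i$ in $\pi$ equals that of $j$ in $f(\pi)$, with two cases: $i$ precedes $j$, or $j$ precedes $i$. Bijectivity of $f$ then gives the equality.

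You instead work directly with the subset-sum definition in \Cref{def:bcr-degree} and pair coalitions. Your Step 2 ($j \notin J$) corresponds exactly to the paper's Case 1. Your Step 3 (matching $K\cup\{j\}$ with $K\cup\{i\}$) corresponds to Case 2. The hypothesis is used at the same points in both arguments.

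Your version is self-contained. It needs only that the weight depends on $|J|$ alone, and it avoids the equivalence between the subset and permutation formulas, which the paper invokes without proof. The paper's version packs the bookkeeping into a single relabelling bijection. That makes the ``$i$ and $j$ are interchangeable'' intuition explicit, but it relies on that extra identity.
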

\begin{proof}
To prove this claim, we first introduce some notations. Denote by $ \Pi(\AG) $ the set of all permutations of $ \AG $. For every permutation $ \pi \in \Pi(\AG) $, define 
\[ P_i(\pi) := \{ j \in \AG : \pi(j) < \pi(i) \}, \]
that is the set of agents ahead of agent $ i $ when they are ordered according to permutation $ \pi $. Using an alternative formulation for the Shapley value, we can express the bCR degree as follows.
\begin{multline}
	\label{eq:shapley-alt}
	\DDD^{i}(\strpElm,\varphi) = \frac{1}{|\AG|!} \sum_{\pi \in \Pi(\AG)} 
    \left( v_{\strpElm,\varphi}(P_i(\pi) \cup \{i\}) - v_{\strpElm,\varphi}(P_i(\pi)) \right)
\end{multline}
Define a function $ f: \Pi(\AG) \to \Pi(\AG) $ that maps each permutation over the set of agents to another permutation as follows. For every permutation $ \pi $, the permutation $ f(\pi) $ is identical to $ \pi $ except that it swaps agent $ i $ with agent $ j $.
Next we show that the following holds.
	\begin{multline}
		\label{eq:sym}
		v_{\strpElm,\varphi}(P_i(\pi) \cup \{i\}) - v_{\strpElm,\varphi}(P_i(\pi)) = v_{\strpElm,\varphi}(P_j(f(\pi)) \cup \{j\}) - v_{\strpElm,\varphi}(P_j(f(\pi)))
	\end{multline}
There are two possible cases:

\textit{Case 1:} agent $ i $ appears \textit{before} agent $ j $ under permutation $ \pi $, i.e., $ j \not\in P_i(\pi) $. Then $ P_i(\pi) = P_j(f(\pi)) $, as such, $ v_{\strpElm,\varphi}(P_i(\pi)) = v_{\strpElm,\varphi}(P_j(f(\pi))) $. Morever, since both $ P_i(\pi) $ and $ P_j(f(\pi)) $ do not contain $ i $ nor $ j $, it holds that: 
$$ v_{\strpElm,\varphi}(P_i(\pi) \cup \{i\}) = v_{\strpElm,\varphi}(P_j(f(\pi)) \cup \{j\}) $$ Therefore, these two equalities imply \Cref{eq:sym}.

\textit{Case 2:} agent $ i $ appears \textit{after} agent $ j $ under permutation $ \pi $, i.e., $ j \in P_i(\pi) $. Then  $ P_i(\pi) \cup \{i\} = P_j(f(\pi)) \cup \{j\} $, as such, $ v_{\strpElm,\varphi}(P_i(\pi) \cup \{i\}) = v_{\strpElm,\varphi}(P_j(f(\pi)) \cup \{j\}) $. Moreover, $ P_i(\pi) \setminus \{j\} = P_j(f(\pi)) \setminus \{i\} $, and since $ i $ and $ j $ are symmetric, it holds that: 
$$ v_{\strpElm,\varphi}((P_i(\pi) \setminus \{j\}) \cup \{j\}) = v_{\strpElm,\varphi}((P_j(f(\pi)) \setminus \{i\}) \cup \{i\}), $$ 
i.e., $ v_{\strpElm,\varphi}(P_i(\pi)) = v_{\strpElm,\varphi}(P_j(f(\pi))) $. Therefore, these two equalities also imply \Cref{eq:sym}.

From \Cref{eq:shapley-alt} and \Cref{eq:sym}, we have
\begin{multline}
	\DDD^{i}(\strpElm,\varphi) = \frac{1}{|\AG|!} \sum_{\pi \in \Pi(\AG)} \left( v_{\strpElm,\varphi}(P_j(f(\pi)) \cup \{j\}) - v_{\strpElm,\varphi}(P_j(f(\pi))) \right)
\end{multline}
Since $ f $ is bijective, thus $ \Pi(\AG) = \{ f(\pi) : \pi \in \Pi(\AG)\} $, and by setting $ \mu = f(\pi) $ we obtain
\begin{multline}
	\DDD^{i}(\strpElm,\varphi) = \frac{1}{|\AG|!} \sum_{\mu \in \Pi(\AG)} 
    \left( v_{\strpElm,\varphi}(P_j(\mu) \cup \{j\}) - v_{\strpElm,\varphi}(P_j(\mu)) \right)
\end{multline}
And, consequently
$\DDD^{i}(\strpElm,\varphi) = \DDD^{j}(\strpElm,\varphi)$. 
\end{proof}

The dummy/null player property guarantees that agents who do not contribute towards an outcome will not receive any responsibility degree values.

\begin{proposition}[Null Player]
If for all $J \subseteq \AG \setminus \{i\}: v_{\strpElm, \varphi} (J \cup \{i\}) =v_{\strpElm, \varphi} (J) $,
then $\DDD^{i}(\strpElm,\varphi) = 0$.
\end{proposition}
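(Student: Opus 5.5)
The plan is to substitute the hypothesis directly into \Cref{def:bcr-degree}. By definition,
\[
\DDD^i(\strpElm, \varphi) = \sum_{J \subseteq \AG \setminus \{i\}} \frac{|J|!(|\AG| - |J| - 1)!}{|\AG|!} \cdot \bigl(v_{\strpElm, \varphi}(J \cup \{i\}) - v_{\strpElm, \varphi}(J)\bigr),
\]
and the sum ranges over exactly the coalitions $J \subseteq \AG \setminus \{i\}$ that the hypothesis quantifies over. For each such $J$ the hypothesis gives $v_{\strpElm, \varphi}(J \cup \{i\}) = v_{\strpElm, \varphi}(J)$. Hence every marginal contribution vanishes, and every summand is zero. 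The weights $\frac{|J|!(|\AG|-|J|-1)!}{|\AG|!}$ are finite nonnegative reals, so multiplying them by zero gives zero, and a finite sum of zeros is zero. This yields $\DDD^i(\strpElm,\varphi)=0$.

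Alternatively, one could use the permutation form \Cref{eq:shapley-alt} from the proof of symmetry. Since $P_i(\pi) \subseteq \AG \setminus \{i\}$ for every $\pi \in \Pi(\AG)$, each term $v_{\strpElm,\varphi}(P_i(\pi) \cup \{i\}) - v_{\strpElm,\varphi}(P_i(\pi))$ is zero by hypothesis. The result then follows in the same way.

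There is no real obstacle here. The only point to check is that the index set of the Shapley sum coincides with the set of coalitions covered by the hypothesis. This includes $J = \varnothing$, where the hypothesis reads $v_{\strpElm,\varphi}(\{i\}) = v_{\strpElm,\varphi}(\varnothing)$. It also includes $J = \AG \setminus \{i\}$, where it reads $v_{\strpElm,\varphi}(\AG) = v_{\strpElm,\varphi}(\AG\setminus\{i\})$. Both are covered, so no case distinction is needed.
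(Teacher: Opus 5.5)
Your proposal is correct and is essentially the paper's own proof: substitute the hypothesis into the sum in \Cref{def:bcr-degree}, so that every marginal contribution, and hence every weighted summand, is zero. The permutation-form variant you mention restates the same one-line argument and is equally valid.
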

\begin{proof}
If for a given $ J \subseteq \AG \setminus \{i\}: v_{\strpElm, \varphi} (J \cup \{i\}) =v_{\strpElm, \varphi} (J) $, then we have $ (v_{\strpElm, \varphi}(J \cup \{i\}) - v_{\strpElm, \varphi}(J)) = 0 $. Since it holds for every $ J \in \AG \setminus \{i\} $, then
\begin{align*}
	\DDD^i(\strpElm, \varphi) &= \sum_{J \subseteq  \AG \setminus \{i\}} \frac{|J|!(|\AG| - |J| - 1)!}{|\AG|!} \cdot 0 = 0.
\end{align*}
\end{proof}

\paragraph{Consistency}
Additivity property states that for two \textit{independent} games, the Shapley value of the combined game is equal to the sum of individual values. 
However, this raises the question of how to combine the games and to what extent they are truly independent\footnote{There have been many attempts to relax the additivity axiom, e.g. \cite{young1985monotonic,hart1989potential,chun1989new,van2002axiomatization}.}, particularly, in our setting. One could define this by considering two separate CSGs, resulting in a ``combined'' responsibility degree that is simply the sum of the two games. However, we find this approach uninteresting and not particularly useful. Instead, we consider additivity with respect to disjunctions of outcomes $ \varphi_1 \lor \varphi_2 $. 
Observe that the function $ \DDD^i(\cdot, \cdot) $ is generally not additive. To see this, consider $ \varphi_1 := \varphi $ and $ \varphi_2 := \neg \varphi $. Clearly, for any $ \strpElm $, $ \DDD^i(\strpElm,\varphi_1 \vee \varphi_2) = 0 $, while it may be the case that $ \DDD^i(\strpElm,\varphi_1) > 0 $ or $ \DDD^i(\strpElm,\varphi_2) > 0 $. However, we can show that \textit{subadditivity} holds, and for a special case of outcomes we obtain (exact) additivity. 
\textit{Monotonicity} intuitively states that for two outcomes $ \varphi_1 $ and $ \varphi_2 $, where $ \varphi_1 $ is ``included'' in $ \varphi_2 $, the degree of responsibility for $ \varphi_2 $ would likely increase.
Additivity and monotonicity relate to consistency in the sense that if we expand the outcome (under certain conditions), then responsibility should increase or at least not decrease.

\begin{proposition}[Sub-additivity]
\label{prop:sub-add}
Given $\strpElm$, for any two outcome $\varphi_1$ and $\varphi_2$, and any $i \in \AG$:
\[\DDD^i(\strpElm, \varphi_1 \lor \varphi_2) \le \DDD^i(\strpElm, \varphi_1) + \DDD^i(\strpElm,\varphi_2)\]
\end{proposition}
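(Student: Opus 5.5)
The plan is to reduce the inequality to a termwise comparison of marginal contributions and then sum with the Shapley weights. By \Cref{def:bcr-degree}, $\DDD^i(\strpElm,\cdot)$ is a nonnegative combination of the marginal contributions $\Delta_\psi(J) := v_{\strpElm,\psi}(J\cup\{i\}) - v_{\strpElm,\psi}(J)$, with weights $\frac{|J|!(|\AG|-|J|-1)!}{|\AG|!}\ge 0$. These weights do not depend on the outcome $\psi$. It therefore suffices to prove, for every $J \subseteq \AG\setminus\{i\}$,
\[
\Delta_{\varphi_1\lor\varphi_2}(J) \le \Delta_{\varphi_1}(J) + \Delta_{\varphi_2}(J),
\]
and then multiply by the weights and sum over $J$.

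To compare the three value functions I would fix a coalition $A$ and write, for each completion $\strpElm_{-A}$, $p_\psi(\strpElm_{-A})$ for the probability of the $k$-bounded histories of $(\strpElm_A,\strpElm_{-A})$ satisfying $\psi$. Because the histories are measured in one Markov chain $\Game_{(\strpElm_A,\strpElm_{-A})}$, inclusion--exclusion holds pointwise:
\[
p_{\varphi_1\lor\varphi_2} = p_{\varphi_1} + p_{\varphi_2} - p_{\varphi_1\land\varphi_2} \le p_{\varphi_1} + p_{\varphi_2}.
\]
Evaluating at minimisers of $p_{\varphi_1}$ and $p_{\varphi_2}$ gives the upper bound
\[
v_{\strpElm,\varphi_1\lor\varphi_2}(J\cup\{i\}) \le v_{\strpElm,\varphi_1}(J\cup\{i\}) + v_{\strpElm,\varphi_2}(J\cup\{i\}),
\]
provided a single completion minimises both, and the same argument applied at $J$ handles the subtracted term. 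Monotonicity of probability gives the matching lower bound $v_{\strpElm,\varphi_1\lor\varphi_2}(J) \ge \max(v_{\strpElm,\varphi_1}(J), v_{\strpElm,\varphi_2}(J))$.

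The main obstacle is the subtracted term $v_{\strpElm,\varphi_1\lor\varphi_2}(J)$. The termwise inequality needs an \emph{upper} bound on $\Delta_{\varphi_1\lor\varphi_2}(J)$, and hence a \emph{lower} bound on $v_{\strpElm,\varphi_1\lor\varphi_2}(J)$ of the form $v_{\strpElm,\varphi_1}(J)+v_{\strpElm,\varphi_2}(J)$ minus the conjunction term. The union bound points the wrong way for this, and the $\max$ bound above is too weak.

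My first attempt would be to use inclusion--exclusion on the values themselves. This requires a common minimising completion $\strpElm_{-A}$ for $\varphi_1$, $\varphi_2$ and $\varphi_1\lor\varphi_2$ at each coalition $A$, so that
\[
v_{\strpElm,\varphi_1\lor\varphi_2}(A) = v_{\strpElm,\varphi_1}(A) + v_{\strpElm,\varphi_2}(A) - v_{\strpElm,\varphi_1\land\varphi_2}(A).
\]
Under that assumption, $\Delta_{\varphi_1\lor\varphi_2} = \Delta_{\varphi_1}+\Delta_{\varphi_2}-\Delta_{\varphi_1\land\varphi_2}$. The claim would then follow from $\Delta_{\varphi_1\land\varphi_2}(J)\ge 0$, which holds because $v_{\strpElm,\psi}$ is monotone in the coalition: enlarging the coalition fixes more strategies to those of $\strpElm$, so the minimiser has fewer choices.

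If no such common minimiser exists in general, I expect the termwise inequality can fail. In that case I would examine a two-agent instance like \Cref{eg:bcr-degree} to test whether an extra hypothesis is needed for the proposition. A natural candidate is that $\varphi_1$ and $\varphi_2$ are minimised by a common counter-strategy, as in the special case where exact additivity is obtained.
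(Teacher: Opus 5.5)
Your proposal does not prove the proposition, and no proof is possible. The obstruction you found at the subtracted term $v_{\strpElm,\varphi_1\lor\varphi_2}(J)$ is real: the statement is false without further hypotheses.

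Here is a counterexample. Take a single agent $i$. If you want $|\AG|\ge 2$, add a dummy agent $j$ with one action; nothing changes, since then $v_{\strpElm,\psi}(\{j\})=v_{\strpElm,\psi}(\varnothing)$ and $v_{\strpElm,\psi}(\{i,j\})=v_{\strpElm,\psi}(\{i\})$ for every $\psi$. From $s^0$, action $a$ moves with probability $\frac{1}{2}$ each to absorbing states labelled $\{p,q\}$ and $\varnothing$. Action $b$ moves with probability $\frac{1}{2}$ each to absorbing states labelled $\{p\}$ and $\{q\}$. Let $\varphi_1=\X p$, $\varphi_2=\X q$, and let $\strpElm$ play $b$. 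A strategy playing $a$ with probability $x$ yields $\varphi_1$, $\varphi_2$, $\varphi_1\lor\varphi_2$ with probabilities $\frac{1}{2}$, $\frac{1}{2}$, $1-\frac{x}{2}$. Hence $v_{\strpElm,\varphi_1}(\{i\})=v_{\strpElm,\varphi_1}(\varnothing)=\frac{1}{2}$, and likewise for $\varphi_2$. Meanwhile $v_{\strpElm,\varphi_1\lor\varphi_2}(\{i\})=1$ and $v_{\strpElm,\varphi_1\lor\varphi_2}(\varnothing)=\frac{1}{2}$. So $\DDD^i(\strpElm,\varphi_1)=\DDD^i(\strpElm,\varphi_2)=0<\frac{1}{2}=\DDD^i(\strpElm,\varphi_1\lor\varphi_2)$. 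Intuitively, choosing $b$ guarantees that one of the two events occurs, but influences neither one individually.

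The paper's own proof breaks at the same point where you got stuck, for two reasons.
\begin{itemize}
\item It passes the union bound through the minimum using $\min_x[f(x)+g(x)]\le\min_x f(x)+\min_x g(x)$. That is the wrong direction; the valid inequality is $\ge$. For instance, if the minimising agents can choose between a move leading surely to $\varphi_1\land\neg\varphi_2$ and one leading surely to $\varphi_2\land\neg\varphi_1$, then $v_{\strpElm,\varphi_1\lor\varphi_2}(A)=1>0=v_{\strpElm,\varphi_1}(A)+v_{\strpElm,\varphi_2}(A)$.
\item It then invokes monotonicity of the Shapley value under pointwise comparison of value functions, which is also false. The example above satisfies $v_{\strpElm,\varphi_1\lor\varphi_2}\le v_{\strpElm,\varphi_1}+v_{\strpElm,\varphi_2}$ at both coalitions, yet violates the conclusion.
\end{itemize}

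Your conditional repair also needs strengthening. A completion minimising $\varphi_1$, $\varphi_2$ and $\varphi_1\lor\varphi_2$ need not minimise $\varphi_1\land\varphi_2$, so your identity involving $v_{\strpElm,\varphi_1\land\varphi_2}$ does not follow. In the example, $x=1$ minimises all three at $\varnothing$ (at $\{i\}=\AG$ the condition is vacuous). But $v_{\strpElm,\varphi_1\land\varphi_2}(\varnothing)=0$ is attained only at $x=0$, and the conclusion fails.

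The precise condition is as follows. Write $w(A):=v_{\strpElm,\varphi_1}(A)+v_{\strpElm,\varphi_2}(A)-v_{\strpElm,\varphi_1\lor\varphi_2}(A)$. Linearity of the Shapley value gives $\DDD^i(\strpElm,\varphi_1\lor\varphi_2)=\DDD^i(\strpElm,\varphi_1)+\DDD^i(\strpElm,\varphi_2)-(\text{$i$'s Shapley value in } w)$. So the inequality holds exactly when $i$'s Shapley value in $w$ is nonnegative. Your termwise route needs the stronger condition $w(J)\le w(J\cup\{i\})$ for all $J\subseteq\AG\setminus\{i\}$. That condition follows if a single completion minimises all four of $\varphi_1$, $\varphi_2$, $\varphi_1\lor\varphi_2$, $\varphi_1\land\varphi_2$ at each coalition. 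Then $w=v_{\strpElm,\varphi_1\land\varphi_2}$, and your coalition-monotonicity argument finishes the proof. The proposition must carry a hypothesis of this kind.
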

\begin{proof}
For any fixed coalition $ A \subseteq \AG $ and any strategy profile $ (\strpElm_A, \strpElm_{-A}) $, we have:
\begin{multline}
\label{eq:sub-additivity}
\ \prob(\{\rho \mid \rho \in \HIST{s^0}{(\vec{\sigma}_A, \strpElm_{-A})}^{\le k} \land \rho \models (\varphi_1 \lor \varphi_2)\} )  \\
\leq \ \big( \prob(\{\rho \mid \rho \in \HIST{s^0}{(\vec{\sigma}_A, \strpElm_{-A})}^{\le k} \land \rho \models \varphi_1 \}) \\
 \ + \prob(\{\rho \mid \rho \in \HIST{s^0}{(\vec{\sigma}_A, \strpElm_{-A})}^{\le k} \land \rho \models  \varphi_2 \}) \big)
\end{multline}
Since the value of the sum at the joint minimiser is no more than the sum of the individual minima, which may occur at different points: 
\[\min_x \lbrack f(x)+g(x) \rbrack \le \min_x f(x) + \min_x g(x)\]
taking the minimum over $ \strpElm_{-A} $ on both sides of \Cref{eq:sub-additivity}, we have:
\[
v_{\strpElm, \varphi_1 \lor \varphi_2}(A) \leq v_{\strpElm, \varphi_1}(A) + v_{\strpElm, \varphi_2}(A)
\quad \text{for all } A \subseteq \AG
\]
By monotonicity of the Shapley value under pointwise function, we obtain the result.
\end{proof}

Intuitively, this holds because the worst-case probabilities for $\varphi_1 \lor \varphi_2$ may overlap, so responsibility for the disjunction can be ``shared'' or even ``cheaper'' than addressing both outcomes individually.

\begin{example}
\label{eg:sub-add}
Continue the running example, suppose given outcomes $\varphi_1=\X \ \crash$ and $\varphi_2=\X \ \pass$, consider $\strpElm$ corresponding to each agent choosing \emph{to brake} with probability of 1, let $\lor$ denote $\varphi_1 \lor \varphi_2$ for short. In this case, we have:
\begin{align*}
&v_{\strpElm,\!\varphi_1}\!(\{1,2\})\!=\!v_{\strpElm,\!\varphi_1}\!(\{1\})\!=\!v_{\strpElm,\!\varphi_1}\!(\{2\})\!=\!v_{\strpElm,\!\varphi_1}\!(\varnothing) = 0.12, \\
&v_{\strpElm, \varphi_2} (\{1,2\}) = 0.88,\quad v_{\strpElm, \varphi_2} (\{1\}) = 0.8, \\
&v_{\strpElm, \varphi_2} (\{2\}) = 0.4,\qquad v_{\strpElm, \varphi_2} (\varnothing) = 0, \\
&v_{\strpElm, \lor} (\{1,2\}) = v_{\strpElm, \lor} (\{1\}) = v_{\strpElm, \lor} (\{2\}) = v_{\strpElm, \lor } (\varnothing) = 0.12 
\end{align*}
Thus: 
\begin{align*}
\DDD^{1}(\strpElm, \varphi_1) &= \DDD^{2}(\strpElm, \varphi_1) = 0, \\
\DDD^{1}(\strpElm, \varphi_2) &= 0.64, \ \DDD^{2}(\strpElm, \varphi_2) = 0.24, \\
\DDD^{1}(\strpElm, \varphi_1 \lor \varphi_2) &= \DDD^{2}(\strpElm, \varphi_1 \lor \varphi_2) = 0
\end{align*}
Clearly, for $i=\{1,2\}:$ 
$\DDD^{i}(\strpElm, \varphi_1 \lor \varphi_2) \le \DDD^{i}(\strpElm, \varphi_1) + \DDD^{i}(\strpElm, \varphi_2)$.
\end{example}

To obtain exact additivity, we define the following conditions.

\begin{definition}
	\label{def:disjoint-avoidable}
	For two outcomes $ \varphi_1 $ and $ \varphi_2 $, we say that they are \textit{disjoint} and \textit{avoidable}, respectively, if
	\begin{itemize}
		\item [i)] for all $ \strpElm $, we have $ \prob(\{\rho \mid \rho \in \HIST{s^0}{{\strpElm}}^{\leq k}  \land \rho \models \varphi_1 \wedge \varphi_2 \}) = 0 $
		\item [ii)] there exists $ \strpElm $, such that $ \prob(\{\rho \mid \rho \in \HIST{s^0}{\strpElm} \rho \models (\varphi_1 \lor \varphi_2)\}) <1 $.
	\end{itemize}
\end{definition}

\begin{proposition}[Exact Additivity]
\label{prop:sup-add}
Let $\varphi_1$ and $\varphi_2$ be two outcome formulas such that they are disjoint and avoidable,
then for any $i \in \AG$:
\[\DDD^i(\strpElm, \varphi_1 \lor \varphi_2) = \DDD^i(\strpElm, \varphi_1) + \DDD^i(\strpElm,\varphi_2)\]
\end{proposition}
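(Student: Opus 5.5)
The plan is to reduce the claim to a pointwise identity between characteristic functions and then use the linearity of the Shapley formula in \Cref{def:bcr-degree}. The map $v \mapsto \sum_{J \subseteq \AG\setminus\{i\}} \frac{|J|!(|\AG|-|J|-1)!}{|\AG|!}\,(v(J\cup\{i\})-v(J))$ is linear in $v$. It therefore suffices to prove, for every coalition $A \subseteq \AG$,
\[
v_{\strpElm,\varphi_1\lor\varphi_2}(A) = v_{\strpElm,\varphi_1}(A) + v_{\strpElm,\varphi_2}(A).
\]
Summing with the Shapley weights then gives $\DDD^i(\strpElm,\varphi_1\lor\varphi_2)=\DDD^i(\strpElm,\varphi_1)+\DDD^i(\strpElm,\varphi_2)$ for every $i$.

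The first step is the profile-wise identity. Fix $A$ and any completion $\strpElm_{-A}$, and write $P_\psi(\strpElm_{-A})$ for $\prob(\{\rho\in\HIST{s^0}{(\strpElm_A,\strpElm_{-A})}^{\le k}\mid\rho\models\psi\})$. By inclusion--exclusion, $P_{\varphi_1\lor\varphi_2}=P_{\varphi_1}+P_{\varphi_2}-P_{\varphi_1\land\varphi_2}$. Condition i) of \Cref{def:disjoint-avoidable} (disjointness) holds for \emph{all} profiles, in particular for $(\strpElm_A,\strpElm_{-A})$, so the last term vanishes and $P_{\varphi_1\lor\varphi_2}=P_{\varphi_1}+P_{\varphi_2}$ as functions of $\strpElm_{-A}$. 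Since the minimisation ranges over memoryless profiles on a finite $k$-bounded horizon, it is a minimisation of a continuous (polynomial) function over a compact product of simplices, so each minimum is attained. Taking minima then gives the easy inequality
\[
\min(P_{\varphi_1}+P_{\varphi_2}) \ge \min P_{\varphi_1}+\min P_{\varphi_2},
\]
that is, $v_{\strpElm,\varphi_1\lor\varphi_2}(A)\ge v_{\strpElm,\varphi_1}(A)+v_{\strpElm,\varphi_2}(A)$.

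The main obstacle is the reverse inequality. It holds exactly when a single $\strpElm_{-A}$ simultaneously minimises $P_{\varphi_1}$ and $P_{\varphi_2}$. I would use avoidability (condition ii) to rule out the degenerate case where the disjunction is forced with probability $1$, as in the $\varphi,\neg\varphi$ counterexample. I would then try to build a common minimiser: on the bounded unfolding, the opponents choose, state by state, actions minimising the mass flowing into histories satisfying either disjunct. Disjointness means that probability mass steered away from $\varphi_1$ is not counted for $\varphi_2$ on the same history.

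I am not confident this construction goes through for memoryless strategies in general. The minimiser for $\varphi_2$ may push mass toward $\varphi_1$ and vice versa; \Cref{eg:sub-add} with $\X\crash$ and $\X\pass$ hints at this tension. If the construction fails, I would make the common-minimiser condition explicit as the operative hypothesis, or strengthen ``avoidable'' to it. Under that condition the pointwise identity, and hence the proposition, follows immediately by the linearity argument above.
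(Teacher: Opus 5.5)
Your linearity reduction is correct, and so is the inequality $v_{\strpElm,\varphi_1\lor\varphi_2}(A)\ge v_{\strpElm,\varphi_1}(A)+v_{\strpElm,\varphi_2}(A)$ from disjointness. The reverse inequality you flag is a genuine gap, and it cannot be closed: avoidability does not give a common minimiser, and the proposition is false as stated.

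Take one agent with actions $a,b,c$ at $s^0$. Action $a$ moves to a $p$-state, $b$ moves to a $q$-state, and $c$ moves to the $p$-state or to an unlabelled state with probability $\tfrac12$ each; all successors are absorbing. Let $\varphi_1=\X p$, $\varphi_2=\X q$ and $k=1$. These are disjoint, and avoidable (playing $c$ gives $\tfrac12$). For the profile $\strpElm$ that plays $a$, we get $v_{\strpElm,\varphi_1}(\{1\})=1$, $v_{\strpElm,\varphi_1}(\varnothing)=\min_x(x_a+\tfrac12x_c)=0$ and $v_{\strpElm,\varphi_2}\equiv0$. By contrast, $v_{\strpElm,\varphi_1\lor\varphi_2}(\{1\})=1$ and $v_{\strpElm,\varphi_1\lor\varphi_2}(\varnothing)=\min_x(x_a+x_b+\tfrac12x_c)=\tfrac12$. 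So $\DDD^1(\strpElm,\varphi_1\lor\varphi_2)=\tfrac12$, while $\DDD^1(\strpElm,\varphi_1)+\DDD^1(\strpElm,\varphi_2)=1$. Adding a dummy second agent with a single action gives the same numbers for $|\AG|=2$.

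More generally, let $w:=v_{\strpElm,\varphi_1\lor\varphi_2}-v_{\strpElm,\varphi_1}-v_{\strpElm,\varphi_2}$. Then $w\ge0$ and $w(\AG)=0$, so by Shapley efficiency the defects $\DDD^i(\strpElm,\varphi_1\lor\varphi_2)-\DDD^i(\strpElm,\varphi_1)-\DDD^i(\strpElm,\varphi_2)$ sum to $-w(\varnothing)$. Hence a common minimiser at $A=\varnothing$ is necessary for the claim to hold for every $i$, not merely sufficient.

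The paper's proof gets past this point only by invoking ``monotonicity of the Shapley value under pointwise comparison'' together with \Cref{prop:sub-add}. The first is not a property of the Shapley value. The Shapley value is linear, and a pointwise nonnegative game can have negative Shapley values: above, $w(\{1\})-w(\varnothing)=-\tfrac12$. Nor does avoidability force $\DDD^i(\strpElm,\varphi_1\lor\varphi_2)>0$; for the profile playing $c$ it is $0$. The second rests on $\min_x[f(x)+g(x)]\le\min_x f(x)+\min_x g(x)$, which is the reverse of the true inequality.

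Your fallback is the right repair. Suppose that, besides disjointness, for every $A\subsetneq\AG$ some $\strpElm_{-A}$ minimises the probabilities of $\varphi_1$ and of $\varphi_2$ simultaneously. Then your pointwise identity holds and linearity finishes the proof, with avoidability playing no role. But disjointness and avoidability do not imply that hypothesis. So neither your proposal nor any other argument establishes the proposition as stated.
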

\begin{proof}
By the disjointness assumption on $ \varphi_1 $ and $ \varphi_2 $, and the standard property of probability over disjoint events, we have:
\begin{multline}
\label{eq:cond-additivity}
 \ \prob(\{\rho \mid \rho \in \HIST{s^0}{\strpElm_A, \strpElm_{-A}}^{\le k} \land \rho \models (\varphi_1 \lor \varphi_2)\} )  \\
= \ \prob(\{\rho \mid \rho \in \HIST{s^0}{\strpElm_A, \strpElm_{-A}}^{\le k} \land \rho \models \varphi_1 \}) \\
 \ + \prob(\{\rho \mid \rho \in \HIST{s^0}{\strpElm_A, \strpElm_{-A}}^{\le k} \land \rho \models  \varphi_2 \})
\end{multline}
Applying the pessimistic valuation to both sides of \Cref{eq:cond-additivity}, we obtain:
\[
v_{\strpElm, \varphi_1 \lor \varphi_2}(A) \geq v_{\strpElm, \varphi_1}(A) + v_{\strpElm, \varphi_2}(A)
\quad \text{for all } A \subseteq \AG
\]
Since $\varphi_1 \lor \varphi_2$ is avoidable (thus $\DDD^i(\strpElm, \varphi_1 \lor \varphi_2)>0$) by the assumption, by monotonicity of the Shapley value under pointwise function, we have:
\[\DDD^i(\strpElm, \varphi_1 \lor \varphi_2) \ge \DDD^i(\strpElm, \varphi_1) + \DDD^i(\strpElm,\varphi_2)\]
By combining with \Cref{prop:sub-add}, we obtain the proposition.
\end{proof}

Intuitively, when outcomes are disjoint, preventing either one contributes uniquely to the disjunction. This prevents overlap in marginal contributions and can make the total responsibility grow.

\begin{example}
\label{eg:sup-add}
Note that the scenario proposed in \Cref{eg:sub-add} does not satisfy \Cref{prop:sup-add}, as it meets assumption i) but violates assumption ii).
\end{example}

\begin{proposition}[Monotonicity]
\label{prop:mono}
If $\varphi_1 \Rightarrow \varphi_2$ and $\varphi_2$ is avoidable, then for any $i \in \AG$ and a given $\strpElm$: 
\[ \DDD^i(\strpElm, \varphi_1) \le \DDD^i(\strpElm, \varphi_2)\]
\end{proposition}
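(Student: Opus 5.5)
The plan is to reduce monotonicity to \Cref{prop:sup-add} (Exact Additivity) together with a non-negativity property of $\DDD^i$. Since $\varphi_1 \Rightarrow \varphi_2$, the outcome $\varphi_2$ is semantically equivalent to $\varphi_1 \lor \varphi_3$ with $\varphi_3 := \varphi_2 \wedge \neg \varphi_1$. For every $A$ and every $\strpElm_{-A}$, the set of $k$-bounded histories satisfying $\varphi_2$ is the same as the set satisfying $\varphi_1 \lor \varphi_3$. Hence $v_{\strpElm,\varphi_2}(A) = v_{\strpElm,\varphi_1\lor\varphi_3}(A)$ for all $A$, and so $\DDD^i(\strpElm,\varphi_2) = \DDD^i(\strpElm,\varphi_1 \lor \varphi_3)$.

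Next I check the hypotheses of \Cref{def:disjoint-avoidable} for the pair $\varphi_1,\varphi_3$.
\begin{itemize}
\item Disjointness: $\varphi_1 \wedge \varphi_3 = \varphi_1 \wedge \varphi_2 \wedge \neg\varphi_1$ is unsatisfiable, so its probability is $0$ under every profile.
\item Avoidability: $\varphi_1 \lor \varphi_3$ is equivalent to $\varphi_2$, which is avoidable by assumption.
\end{itemize}
\Cref{prop:sup-add} then yields
\[
\DDD^i(\strpElm,\varphi_2) = \DDD^i(\strpElm,\varphi_1) + \DDD^i(\strpElm,\varphi_3).
\]
It therefore suffices to show that $\DDD^i(\strpElm,\varphi_3) \ge 0$.

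For non-negativity I will show that each marginal contribution in \Cref{def:bcr-degree} is non-negative, for an arbitrary outcome $\psi$. Fix $J \subseteq \AG\setminus\{i\}$.
\begin{itemize}
\item $v_{\strpElm,\psi}(J)$ minimises the probability of $\psi$ over all choices of $\strpElm_{-J}$, and this includes agent $i$'s strategy.
\item $v_{\strpElm,\psi}(J\cup\{i\})$ minimises the same quantity over $\strpElm_{-(J\cup\{i\})}$ only, with $i$'s strategy fixed to $\sigma_i$.
\end{itemize}
Every profile feasible in the second minimisation is feasible in the first, so $v_{\strpElm,\psi}(J\cup\{i\}) \ge v_{\strpElm,\psi}(J)$. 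Since the Shapley weights are non-negative, $\DDD^i(\strpElm,\psi)\ge 0$. Applying this with $\psi = \varphi_3$ finishes the argument.

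The main obstacle is the semantic step: making sure the paper's reading of $\varphi_1\Rightarrow\varphi_2$ is implication on the bounded histories. Then $\varphi_2$ and $\varphi_1\lor\varphi_3$ really induce the same event for every profile, so that the $v$ functions coincide pointwise. This matters because $\varphi_3$ is a Boolean combination of path formulas that need not be in the original PATL path grammar, so I would treat outcomes as events over $k$-bounded histories throughout.

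I would deliberately avoid the tempting shortcut of only arguing $v_{\strpElm,\varphi_1}\le v_{\strpElm,\varphi_2}$ pointwise. The Shapley value is monotone in the marginal contributions, not in the values of $v$, so that inequality alone does not give the result. The decomposition above is exactly what turns the pointwise comparison into a comparison of marginal contributions.
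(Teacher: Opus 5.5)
Your decomposition $\varphi_2\equiv\varphi_1\lor\varphi_3$ is correct. So is your non-negativity lemma: $v_{\strpElm,\psi}(J\cup\{i\})\ge v_{\strpElm,\psi}(J)$, because the minimisation defining $v_{\strpElm,\psi}(J)$ ranges over a larger set that contains $\sigma_i$. You are also right that pointwise $v_{\strpElm,\varphi_1}\le v_{\strpElm,\varphi_2}$ does not give Shapley monotonicity. That is exactly the step the paper's own proof uses, so the paper's proof does not establish the claim either.

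The gap is your appeal to \Cref{prop:sup-add}. Its proof in the paper uses that same invalid pointwise principle. It also relies on \Cref{prop:sub-add}, whose proof rests on $\min_x[f(x)+g(x)]\le\min_x f(x)+\min_x g(x)$; the true inequality goes the other way. For disjoint $\varphi_1,\varphi_3$ you only get $v_{\strpElm,\varphi_1\lor\varphi_3}(A)\ge v_{\strpElm,\varphi_1}(A)+v_{\strpElm,\varphi_3}(A)$. This can be strict when the opponents that minimise $\varphi_1$ differ from those that minimise $\varphi_3$. The excess game is non-negative, but its Shapley value for $i$ can be negative. So the identity $\DDD^i(\strpElm,\varphi_2)=\DDD^i(\strpElm,\varphi_1)+\DDD^i(\strpElm,\varphi_3)$ is not available, and in fact the proposition itself is false.

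Here is a counterexample. Take two agents and a one-step game. At $s^0$ agent~1 picks $a$ or $a'$ and agent~2 picks $c$ or $c'$. Play then moves deterministically to an absorbing state:
\begin{itemize}
\item $(a,c)$ goes to $s_p$, labelled $p$;
\item $(a',c)$ goes to $s_q$, labelled $q$;
\item $(a,c')$ and $(a',c')$ go to an unlabelled state.
\end{itemize}
Let $\strpElm=(a,c)$ (pure), $\varphi_1=\X p$ and $\varphi_2=\X(p\lor q)$, so $\varphi_3$ amounts to $\X q$. List values on $\{1,2\},\{1\},\{2\},\varnothing$ in that order. Then $v_{\strpElm,\varphi_1}=(1,0,0,0)$ and $v_{\strpElm,\varphi_2}=(1,0,1,0)$: with agent~2 fixed to $c$, agent~1 can steer between $p$ and $q$ but cannot avoid $p\lor q$. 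Also $v_{\strpElm,\varphi_3}\equiv 0$. Hence $\DDD^1(\strpElm,\varphi_1)=\tfrac12$, while $\DDD^1(\strpElm,\varphi_2)=\DDD^1(\strpElm,\varphi_3)=0$.

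Here $\varphi_1\Rightarrow\varphi_2$, the formulas $\varphi_1,\varphi_3$ are disjoint, and $\varphi_2$ is avoidable (it has probability $0$ under $(a,c')$). This refutes both \Cref{prop:sup-add} and the monotonicity claim. Agent~2 also refutes \Cref{prop:sub-add}, since $\DDD^2(\strpElm,\varphi_2)=1>\tfrac12=\DDD^2(\strpElm,\varphi_1)+\DDD^2(\strpElm,\varphi_3)$.

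Your argument becomes valid under an added hypothesis such as $v_{\strpElm,\varphi_2}(A)=v_{\strpElm,\varphi_1}(A)+v_{\strpElm,\varphi_3}(A)$ for every $A$, which holds when the two disjoint pieces share a common minimiser. Linearity of the Shapley value then gives exact additivity, and your non-negativity lemma finishes the inequality.
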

\begin{proof}
For every coalition $ A \subseteq \AG $, and any strategy profile $ (\strpElm_A, \strpElm_{-A}) $, the implication $ \varphi_1 \Rightarrow \varphi_2 $ ensures:
\begin{align*}
& \prob(\{\rho \mid \rho \in \HIST{s^0}{\strpElm_A, \strpElm_{-A}}^{\le k} \land \rho \models \varphi_1 \}) \\
\leq & \quad
\prob(\{\rho \mid \rho \in \HIST{s^0}{\strpElm_A, \strpElm_{-A}}^{\le k} \land \rho \models \varphi_2 \})
\end{align*}
Therefore, taking the minimum over $ \vec{\sigma}_{-A} $, we get:
\[
v_{\strpElm, \varphi_1}(A) \leq v_{\strpElm, \varphi_2}(A)
\quad \text{for all } A \subseteq \AG
\]
Since $\varphi_2$ is avoidable (thus $\DDD^i(\strpElm, \varphi_2)>0$) by the assumption, and the Shapley value is monotonic under pointwise comparison of value functions, it follows that:
$ \DDD^i(\strpElm, \varphi_1) \le \DDD^i(\strpElm, \varphi_2).$
\end{proof}

Intuitively, responsibility should not decrease when the outcome becomes more inclusive - any trace satisfying $\varphi_1$ also satisfies $\varphi_2$, so agents influencing $\varphi_1$ also influencing $\varphi_2$, potentially to a greater extent.

\begin{example}
\label{eg:monotone}
Continue the running example, consider $\varphi_1 {=} \X \crash$ and $\varphi_2 {=} \F_{\le 2} \ \crash$ and $\strpElm$ corresponding to each agent choosing \emph{not to brake} with probability of 1. Clearly, in this case, $\varphi_1 \Rightarrow \varphi_2$ and $\varphi_2$ is avoidable. We have: 
$ \DDD^1(\strpElm, \varphi_1) = \DDD^1(\strpElm, \varphi_2) = 0.64 $,
$\DDD^2(\strpElm, \varphi_1) = \DDD^2(\strpElm, \varphi_2) = 0.24$.
We have ``$=$'' specifically here as there is only a self-loop from $s_1$.
\end{example}

\section{Logic with CR attributions: \rpatl}

We introduce \rpatl, a variant of PATL that incorporates quantified reward and responsibility attribution formulae as defined in \Cref{def:bcr-degree}.

\begin{definition}
\label{def:syntax}                                                         
The \emph{syntax} of \rpatl is made up of \emph{state formulae} and \emph{history formulae} represented by $\phi$ and $\psi$, respectively.
\begin{GRAMMAR}
 \phi
     &::=&
   a 
      \VERTICAL
   \neg \phi
      \VERTICAL
   \phi \land \phi
      \VERTICAL
    \langle A \rangle \P_{\bowtie p} \lbrack {\psi} \rbrack
     \VERTICAL
    \langle A \rangle \R_{\bowtie q} \lbrack {\psi} \rbrack
     \VERTICAL 
     \\
     &&
    \langle A \rangle \D_{\bowtie d} \lbrack {\BCR_{i}(\vec{\sigma},\psi)} \rbrack
      \\
  \psi
     &::=&
  \X \phi 
     \VERTICAL
  \phi  \U_{\le k} \phi 
\end{GRAMMAR} 
$q, d \in \RRR$ are reward and responsibility degree bounds, respectively. 
\end{definition}
The formula $\langle A \rangle \R_{\bowtie q} \ \lbrack \F_{\le k} \psi \rbrack$ expresses that the coalition $A$ has a strategy such that the expected rewards of satisfying path formula $\psi$ is $\bowtie q$ when the strategy is followed.
The formula $\langle A \rangle \D_{\bowtie d} \ \lbrack \BCR_i (\vec{\sigma}, \psi) \rbrack$ expresses that, under strategy profile $\vec{\sigma}$, the backward counterfactual responsibility of $i$ in ensuring $\psi$ within coalition $A$ is quantified by $\bowtie d$.

\begin{definition}
\label{def:semantics}
Given a model $\Game$, semantics for \rpatl are interpreted as follows.

For a state $s \in S$ of $\MMM$, the \emph{satisfaction relation} $s \models \phi$ for state formula denotes ``$s$ satisfies $\phi$'':
\begin{itemize}

\item $s \models a$ iff $a \in L(s)$.

\item $s \models \neg\phi$ iff $s \not \models \phi$.

\item $s \models \phi \land \phi'$ iff 
  	$s \models \phi$ and  $s \models  \phi'$.

\item $s \models  \langle A \rangle \P_{\bowtie p} [\psi] $ iff \\
$\exists \strpElm_A . \forall \strpElm_{-A}. (\PPP ( \{ \rho \in  \HIST{s}{(\strpElm_A,\strpElm_{-A})}  \mid \rho \models \psi \} ) \bowtie p)$.

\item $s \models  \langle A \rangle \R_{\bowtie q} \lbrack \psi \rbrack$ iff there exists a strategy profile of coalition $A$ such that the expected accumulated reward of paths from state $s$ which are consistent with the strategy and satisfy $\psi$ is $\bowtie q$, \ie 
$\exists \vec{\sigma}_A. \EEE^{\vec{\sigma}_A}_{\Game} \big(s, \rew(r, \psi) \big) \bowtie q$,
where: 
$$
\rew(r, \psi) (\rho) = 
\left \lbrace 
\begin{array}{ll}
\sum^{k}_{i=0} (r_a(\rho_a(i)) + r_s(\rho_s(i))) & \text{if } \rho \models \psi;\\
0 & \text{otherwise} 
\end{array}
\right.
$$
where $k$ denotes the time bound of $\psi$. 

\item $s \models \langle A \rangle \D_{\bowtie d} \lbrack \BCR_i(\strpElm, \psi) \rbrack $ 
iff \ $\DDD^i (\strpElm_{A}, \psi) \bowtie d$, where:
$$\DDD^{i} (\vec{\sigma}_{A}, \psi) = \sum_{J \subseteq A \setminus \{i\}} \tfrac{|J|! (|A| - |J| - 1)}{|A|!} \cdot (v_{\strpElm_{A}, \psi} (J \cup \{i\}) - v_{\strpElm_{A}, \psi}(\{i\})).$$
    
\end{itemize}
    
For a history $\rho \in \HIST{s_0}{\MMM}$, the \emph{satisfaction relation}
$\rho \models \psi$ for a path formula $\psi$ denotes that ``$\rho$ satisfies $\psi$'':
\begin{itemize} 
    \item $\rho \models \X \phi$ iff $\rho_s(1) \models \phi$.
    \item $\rho \models \phi \U_{\le k} \phi'$ iff there exists $i \le k$ such that:
  $\rho_s(i) \models \phi'$, and $\rho_s(j) \models \phi$  for all $j < i$.
\end{itemize}

\end{definition}

\begin{example}
\label{eg:logic}
Consider our running example with the outcome that ``a crash takes place within 2 time steps'', which can be expressed by: 
$\psi = \langle \{1, 2\} \rangle \F_{\le 2} \ \crash$.
Suppose $1$ and $2$ both choose to brake under $\strpElm$, thus:
$v_{\strpElm,\psi}(\{1,2\}) = v_{\strpElm,\psi}(\{1\}) = v_{\strpElm,\psi}(\{2\}) = v_{\strpElm,\psi}(\varnothing) = 0.12$, and thus $\DDD^1(\strpElm, \psi) = \DDD^2(\strpElm, \psi) = 0$.
So: $\D_{\le 0} \lbrack \BCR_i(\vec{\sigma}, \psi) \rbrack$ is $\true$. 
\end{example}

\section{Model Checking \rpatl}
The model checking of \rpatl works in a similar way to rPATL proposed in \cite{KwiatkowskaGPS//:19}. The main difference is that we have the new $\D_{\bowtie d} \lbrack {\BCR_{i}(\vec{\sigma},\psi)} \rbrack$ operator, which corresponds to computing the function $\DDD^{i} (\vec{\sigma}_{A}, \psi)$ in \Cref{def:semantics}, which is presented in Algorithm \ref{alg:dbcr}. Computing such function can be done in polynomial space.

\begin{algorithm}[tbh]
\caption{degree-BCR: Calculate $\DDD^i(\strpElm_A, \psi)$}
\begin{algorithmic}[1]
\State \textbf{Input:} $\MMM, i, \strpElm, A, \psi, k$
\State \textbf{Output:} $\DDD^i(\strpElm_A, \psi)$

\State $d \gets 0$

    \For{each $J \subseteq A \setminus \{i\}$}
    
    \State 
    	$ v_1 \gets \min_{\strpElm_{-(J \cup \{i\})}} \prob(\{\rho \mid \rho \in \HIST{s^0}{{(\strpElm_{(J \cup \{i\})},\strpElm_{-(J \cup \{i\})})}}^{\leq k}  \land \rho \models \psi \}) $
    	
     \State 
    $ v_2 \gets \min_{\strpElm_{-J}} \prob(\{\rho \mid \rho \in \HIST{s^0}{{(\strpElm_{J},\strpElm_{-J})}}^{\leq k} \land \rho \models \psi \}) $

	\State $d \gets d + \tfrac{|J|! (|\AG| - |J| - 1)}{|\AG|!} \cdot (v_1-v_2)$
	
    \EndFor

\State \textbf{Return} $d$
\end{algorithmic}
\label{alg:dbcr}
\end{algorithm}

\begin{lemma}
\label{lem:alg-1}
    Given a CSG $\Game$, a coalition $A$, strategy profile $\strpElm$, and a path formula $\psi$, computing $\DDD^{i} (\vec{\sigma}_{A}, \psi)$ can be done in polynomial space.
\end{lemma}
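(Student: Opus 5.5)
The plan is to follow Algorithm~\ref{alg:dbcr} and argue that each of its steps uses only polynomial space. Space can be reused across iterations of the loop over $J \subseteq A \setminus \{i\}$, so the exponential number of iterations does not matter. The proof therefore reduces to three points:
\begin{itemize}
\item enumerating the subsets $J$ in polynomial space;
\item computing the coefficient $\tfrac{|J|!(|A|-|J|-1)!}{|A|!}$ and the running sum $d$ in polynomial space;
\item computing each value $v_{\strpElm,\psi}(B) = \min_{\strpElm_{-B}} \prob(\cdots)$, for $B \in \{J, J\cup\{i\}\}$, in polynomial space.
\end{itemize}

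The first two points are routine. A subset $J$ is a bit vector of length $|A|$, and the next subset is obtained by incrementing a counter. The coefficient is a rational number whose numerator and denominator have $O(|A|\log|A|)$ bits. The running sum $d$ adds exponentially many rationals, but all of them have denominators dividing $|A|!$. Keeping $d$ over the common denominator $|A|!$ keeps its bit size polynomial, provided each $v$ has polynomial bit size or is compared symbolically. Because the operator only needs to decide $\DDD^i \bowtie d$, I will treat the whole procedure as a decision procedure and let the comparison be the final step.

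The main step is the third point. Fixing $\strpElm_B$ turns the CSG into a game in which only the coalition $-B$ chooses actions, against a fixed stochastic environment. Minimising the probability of the bounded path formula $\psi$ (either $\X\phi$ or $\phi\,\U_{\le k}\,\phi'$) over the coalition's memoryless randomised strategies can then be expressed as a sentence of the existential theory of the reals. The free variables are the values $x_{s,a}=\sigma_j(s)(a)$ for $j \notin B$, constrained by $x_{s,a}\ge 0$ and $\sum_a x_{s,a}=1$. The target probability is a polynomial in these variables, built by unrolling the $k$-step recurrence $p_t(s)=\sum_{\jact}\prod_j \sigma_j(s)(\jact_j)\,\Delta(s,\jact,s')\,p_{t-1}(s')$ with auxiliary variables $p_t(s)$, so the formula stays polynomial in size, assuming $k$ is encoded in unary. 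I would then phrase the comparison as a formula of the form ``there exist $\strpElm_{-J\cup\{i\}}$ and $\strpElm_{-J}$ such that \dots''. The minima are attained because the strategy space is compact and the probability is continuous in the strategy variables. I would also expect the paper's earlier argument for rPATL with memoryless strategies \cite{KwiatkowskaNPS//:21} to be invocable here. Since the existential theory of the reals is decidable in PSPACE (Canny), each comparison, and the characterisation of each $v$ by a polynomial-size formula, is in PSPACE.

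The obstacle I expect is assembling the global decision $\DDD^i \bowtie d$, because the exponentially many $v$ values are algebraic numbers that cannot all be stored at once. The first approach I would try is to write $\DDD^i \bowtie d$ as a single first-order sentence over the reals. That sentence would state that there exist values $y_B$, for the subsets $B$ of $A$, each equal to the minimum (using one $\exists$ for the witness and one $\forall$ over strategies to certify minimality), such that their weighted sum satisfies $\bowtie d$. Its size is exponential in $|A|$ but has only a fixed number of quantifier alternations. I would then argue, as in the rPATL complexity results, that the number of agents is treated as fixed or that the sum can be streamed, yielding polynomial space. If that fails, I would fall back to streaming $d$ with each $v$ computed to polynomial precision via PSPACE bisection in the existential theory of the reals.
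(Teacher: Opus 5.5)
There is a genuine gap, and it is the obstacle you flag yourself. Your ETR encoding of a single $v_{\strpElm,\psi}(B)$ is fine, and it is faithful to the memoryless definition. But it determines each value only implicitly, as an algebraic number that can be compared with rationals. $\DDD^i$ is a signed combination of $2^{|A|-1}$ such numbers, and neither of your assembly strategies decides $\DDD^i \bowtie d$ exactly in polynomial space.

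\begin{enumerate}
\item The single first-order sentence has size, and number of variables $y_B$, exponential in $|A|$. The known decision procedures give space bounds polynomial in the size of the sentence, hence exponential in $|A|$. Declaring the number of agents fixed changes the statement, and ``the sum can be streamed'' is exactly what has to be proved.
\item The bisection fallback yields $\DDD^i$ only up to an additive error $2^{-\mathrm{poly}}$. That cannot settle $\DDD^i \bowtie d$ when $\DDD^i = d$, which is the typical case: Example~\ref{eg:logic} checks $\D_{\le 0}$ with $\DDD^i = 0$. Root-separation bounds do not rescue this, because they depend on the algebraic degree of the sum, which can be as large as the product of the $2^{|A|-1}$ individual degrees.
\end{enumerate}

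The missing idea, which is the paper's route, is to make every $v$ an exact rational of polynomial bit size. Then the weighted sum can be accumulated exactly, one subset at a time. Once $\strpElm_J$ is fixed, $\Game_{\strpElm_J}$ is an MDP in which the agents of $-J$ act as a single controller. The paper reads $\min_{\strpElm_{-J}}$ as $\P_{\min=?}[\psi]$ in this MDP and computes it in polynomial time by value iteration or linear programming. With $k$ in unary, all finite-horizon values share a common denominator of polynomial bit size, so the running total stays polynomial; the paper leaves this detail implicit.

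Be aware of what this route costs under your memoryless reading. The identification is sound whenever the MDP has a memoryless deterministic optimum, because deterministic joint actions are products of individual actions; this is immediate for $\X\phi$. For $\phi\,\U_{\le k}\,\phi'$, however, MDP-optimal strategies may need a step counter. The minimum over memoryless strategies can then be strictly larger than the MDP value. So that step needs its own justification, which the paper's proof passes over.
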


\begin{proof}
To show that it can be computed in polynomial space, we propose \Cref{alg:dbcr}. We show that the algorithm runs in PSPACE. 

We first examine lines 5 and 6. Observe that computing:
$$ \min_{\strpElm_{-J}} \prob(\{\rho \mid \rho \in \HIST{s^0}{{(\strpElm_{J},\strpElm_{-J})}}^{\leq k}  \land \rho \models_{\MMM} \psi \}) $$ 
corresponds to model checking a \textit{probabilistic} CTL (PCTL) formula over a \textit{Markov Decision Process} (MDP) \cite{BaierK//:08}. When we apply the (partial) strategy profile $ \strpElm_{J} $, we obtain an MDP $ \Game_{\strpElm_{J}} $. 

We then compute the quantitative value of the PCTL formula $ \P_{\min=?} \psi $, which represents the minimum probability value of $ \psi $ in the MDP $ \Game_{\strpElm_{J}} $. This can be solved in polynomial time using linear programming and value iteration method \cite{BaierK//:08}. Therefore, lines 5 and 6 are solvable in polynomial time. 

Furthermore, the values $ |J|! $ and $ |\AG|! $ require $ O(n \log n) $ bits. Since for each subset $ J $, we only need to keep the current intermediate result in memory, and each intermediate result is of polynomial size, the PSPACE membership follows immediately.
\end{proof}

\begin{theorem}
  \label{thm:complexity}
Model checking \rpatl\ is in PSPACE.	
\end{theorem}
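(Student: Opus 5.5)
We would follow the standard recursive (bottom-up) labelling scheme for branching-time logics. We process the parse tree of the \rpatl formula from the innermost subformulae outwards. For each state formula $\phi$ we compute the set $Sat(\phi) = \{ s \in S \mid s \models \phi\}$, replacing each maximal state subformula by a fresh atomic proposition. Because the formula has at most $|\phi|$ subformulae and each labelling step needs only $|S|$ bits to store, PSPACE membership of the whole procedure reduces to PSPACE membership of each individual operator. Intermediate results from different subformulae can be discarded once their labels are recorded, so space is reused.

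Atoms, negation and conjunction are trivial, taking polynomial time. For $\langle A \rangle \P_{\bowtie p}[\psi]$ and $\langle A \rangle \R_{\bowtie q}[\psi]$ with bounded $\psi$, we invoke the known results for (r)PATL over CSGs \cite{KwiatkowskaGPS//:19,KwiatkowskaNPS//:21}. In those results, deciding whether a coalition has a (randomised memoryless) strategy achieving the threshold against all opponents is expressed as a sentence of the existential or universal theory of the reals. The sentence has polynomially many variables, namely the probabilities $\sigma_i(s)(a)$ and the value variables for each state and step up to $k$. We treat $k$ as part of the input, unfolding the $k$-bounded game as in the cited work. The theory of the reals with a fixed quantifier alternation is decidable in PSPACE (Canny), so these operators are in PSPACE. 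Here we would simply rely on the cited result rather than reprove it.

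For the new operator $\langle A \rangle \D_{\bowtie d}[\BCR_i(\strpElm,\psi)]$, the strategy profile $\strpElm$ is given as part of the formula. Hence $\DDD^i(\strpElm_A,\psi)$ is a concrete number, and by \Cref{lem:alg-1} (\Cref{alg:dbcr}) it is computable in polynomial space. Checking it against $d$ is a single comparison. Since $\psi$ may contain nested state subformulae, we first label their satisfying states, which is already handled by the recursion, before calling \Cref{alg:dbcr}.

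The main obstacle is the strategy-quantifier operators $\P$ and $\R$ under memoryless randomised strategies. Unlike in turn-based games, the optimal values there can be irrational and are not computable by LP. This is why we fall back on the existential theory of the reals rather than polynomial-time methods, and why the overall bound is PSPACE rather than P. We would verify that the encoding stays polynomial in $|S|$, $|\ACT|$ and $k$, with $k$ in unary as the bound of the unfolding. Combining the three cases, each labelling step runs in PSPACE with polynomially many steps, so the theorem follows.
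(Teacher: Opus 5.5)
Your proposal follows the paper's proof. The paper also treats the $\D$-free fragment as rPATL and discharges it by citing the known PSPACE bound for CSGs, and it handles $\langle A \rangle \D_{\bowtie d}[\BCR_i(\vec{\sigma},\psi)]$ via \Cref{lem:alg-1}; your bottom-up labelling just makes explicit what the paper leaves implicit. One caveat, which your argument shares with the paper's: the cited rPATL results are stated for general (history-dependent) strategies, not the memoryless ones used here, so the citation does not literally cover the $\P$/$\R$ cases. Your own $\exists\forall$ encoding over the variables $\sigma_i(s)(a)$ would close this if you carried it out instead of attributing it to the citation; note that it needs the PSPACE bound for a fixed number of quantifier alternations (Renegar), since Canny's result covers only the purely existential/universal case.
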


\begin{proof} 
	The proof consists of two parts. 	
    
	First, we show that model checking \rpatl\ formula without $ \D $ operators is in PSPACE. This follows from the fact that the fragment of \rpatl\ without $ \D $ operators corresponds to the logic rPATL proposed in \cite{KwiatkowskaGPS//:19}, whose model checking over CSGs is shown to be in PSPACE. 
    
    Second, by \Cref{lem:alg-1}, model checking \rpatl\ formulae of the form $\langle A \rangle \D_{\bowtie d} \lbrack {\BCR_{i}(\vec{\sigma},\psi)} \rbrack$ can also be done in PSPACE. Therefore, the \rpatl\ model checking problem can be solved in PSPACE.
\end{proof}

\section{CR-aware Strategic Decision-Making}
\label{sec:resp-aware}
This section addresses how agents should (or, would) choose strategies when they are aware of counterfactual responsibility attributions. We introduce a utility function (defined shortly) that incorporates both the agent's payoff and its degree of responsibility. To identify stable outcomes, we use \textit{Nash equilibrium} (NE) assuming agents act rationally to maximise their utility.

First, we introduce a parametric model. Observe that the set of all memoryless strategies for player $i$ from state $s$ can be encoded by a set of variables $V^i_s = \{ x_a : a \in \ACT_i(s) \}$. Intuitively, the value of $x_a \in V^i_s$ corresponds to the probability of action $a$ being chosen by player $i$ in state $s$.
Let $ V^i = \bigcup_{s \in S} V^i_s $.
A memoryless (mixed) strategy for $ i $ in $ \Game $ thus corresponds to an \textit{evaluation} of such a set of variables represented by a function $ C^i : V^i \to \mathbb{R} $. 

\begin{definition}
\label{def:psmas}
Given $\Game = (\AG, S, s^0, (\ACT_i)_{i \in \AG}, \delta, \ASP, L )$, we construct the corresponding \emph{parametric stochastic multi-agent system} (\psmas) as a tuple 
$\MMM{=}(\AG, S, s^0, \ACT, V, \Delta, \ASP, L)$, where: 
$V^i = \{x_{i,1}, \dots, x_{i,k}\} \subseteq V$ is a finite set of variables (\emph{parameters}) over $\RRR^k$ for each agent $i$; 
$\Delta: S \times \ACT^{\AG} \times S \to \FFF_{x} $ is the \emph{probabilistic transition function}, where $\FFF_x$ is the set of polynomials over $V$ with rational coefficients which can be viewed as a parametric transition probability matrix that respects the distribution in $ \delta $.
\end{definition}
Observe that the resulting \psmas $\MMM$ is polynomial in size wrt the game $\Game$. Next, we introduce the notion of \textit{admissible} evaluations in a given \psmas. An evaluation $ C^i $ is admissible if: 
\begin{enumerate}
	\item [(1)] $ \Delta_{C^i}(s,\jact,s) \in [0,1] $ for all $ s,s' \in S $ and $ \jact \in \ACT^{\AG} $, i.e., each joint action has a probability between 0 and 1;
	\item [(2)] $ C^i(x) \in [0,1] $ for all $ i \in \AG $ and $ x \in V^i $, i.e., for each player, each action probability is between 0 and 1;
	\item [(3)] $ \sum_{x \in V^i_s} C^i(x) = 1 $ for all $ i \in \AG $ and $ s \in S $, i.e., For each player and each state, the total of action probability values should equal to 1.
\end{enumerate}
Henceforth, we assume that the model takes the form of a \psmas. When evaluating solutions, we further assume that such solutions are admissible. 

\subsection{Responsibility-aware Utility Function}
\begin{definition}[Responsibility-aware Utility Function]
\label{def:utility}
Given $\MMM$ and an outcome $\varphi$ and strategy profile $\strpElm$, for an agent $i$, the utility function is defined as:
	\begin{equation} 
	\label{eq:utility-b}
	u^{\varphi}_i(\strpElm) \triangleq \sum_{\rho \in \HIST{s^0}{\strpElm}}\payoff^i(\rho) - \lambda \cdot \DDD^i (\strpElm, \varphi) 
	\end{equation}
    where $\payoff^i(\rho)$ and $\DDD^i (\strpElm, \varphi)$ are the payoff function and bCR degree function defined in \Cref{def:payoff} and \ref{def:bcr-degree} respectively, $\lambda \in \mathbb{R}$ is a coefficient.
\end{definition}

\begin{definition}[\textsc{NE-Computation}]
    Given a \psmas\ $\MMM$, a \rpatl\ path formula $\varphi$, and utility functions $(u^{\varphi}_i(\strpElm))_{i \in \AG}$, the \textsc{NE-Computation} problem is to compute a NE strategy profile $\strpElm$ that satisfies $\varphi$. 
\end{definition}

The high level process to address \textsc{NE-Computation} consists of two main steps:  
(i) \textit{Satisfaction filtering:} identify the set of strategy profiles that satisfy $\varphi$. This is achieved through parametric model checking on $\MMM$, which yields rational valuation functions $V^i$ for each agent $i \in \AG$. These functions capture the probabilities of selecting actions $a \in \ACT$ in each state $s$ under mixed strategies.  
(ii) \textit{Equilibrium filtering:} refine this set by isolating the strategies that constitute an NE. This involves formulating a system of equations characterising NE conditions and solving them within the strategy set obtained in (i).

\begin{proposition}[Quasi-Concavity of Utility]
\label{prop:quasi}
The utility function $u^{\varphi}_i$ is quasi-concave if the following conditions hold:
\begin{itemize}
\item [1)] linearity of reward: the expected reward is linear in $\strpElm_i$;
\item [2)] linearity of responsibility contribution: the responsibility-valuation function $v_{\strpElm,\varphi}(A)$ is linear in $\strpElm_i$ for all coalition $A \subseteq \AG$;
\item [3)] fixed strategies of others: the strategy profile $\strpElm_{-i}$ of other agents is fixed.
\end{itemize}
\end{proposition}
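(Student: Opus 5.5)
Under conditions 1)--3), the plan is to show that $u^{\varphi}_i$, viewed as a function of $\strpElm_i$ alone, is affine on the admissible strategy set of $i$. An affine function is both concave and convex, hence quasi-concave: every upper level set $\{\strpElm_i : u^{\varphi}_i(\strpElm_i,\strpElm_{-i}) \ge c\}$ is the intersection of the convex admissible set with a half-space. Throughout, condition 3) fixes $\strpElm_{-i}$, so we regard $u^{\varphi}_i$ as a map on the admissible evaluations $C^i$ of $V^i$. By admissibility constraints (2) and (3), this domain is a product over states of probability simplices, and therefore convex. This convexity is needed for quasi-concavity to be meaningful.

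\textbf{Step 1 (reward term).} By condition 1), the first summand $\sum_{\rho \in \HIST{s^0}{\strpElm}}\payoff^i(\rho)$ of \Cref{eq:utility-b} is linear (affine) in $\strpElm_i$. This is just the hypothesis applied to the expected reward.

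\textbf{Step 2 (responsibility term).} Expand $\DDD^i(\strpElm,\varphi)$ via \Cref{def:bcr-degree} as a finite combination
\[
\sum_{J \subseteq \AG\setminus\{i\}} c_J \bigl(v_{\strpElm,\varphi}(J\cup\{i\}) - v_{\strpElm,\varphi}(J)\bigr),
\]
where the coefficients $c_J = \frac{|J|!(|\AG|-|J|-1)!}{|\AG|!}$ do not depend on $\strpElm$. By condition 2), each $v_{\strpElm,\varphi}(A)$ is affine in $\strpElm_i$. A fixed linear combination of affine functions is affine, so $\DDD^i(\strpElm,\varphi)$ is affine in $\strpElm_i$. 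Multiplying by the constant $-\lambda$ preserves affinity for any sign of $\lambda$.

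\textbf{Step 3 (conclusion).} The sum of the two affine maps is affine, so for $\strpElm_i,\strpElm_i'$ and $t\in[0,1]$ we get
\[
u^{\varphi}_i(t\strpElm_i+(1-t)\strpElm_i',\strpElm_{-i}) = t\,u^{\varphi}_i(\strpElm_i,\strpElm_{-i})+(1-t)\,u^{\varphi}_i(\strpElm_i',\strpElm_{-i}) \ge \min\{u^{\varphi}_i(\strpElm_i,\strpElm_{-i}),u^{\varphi}_i(\strpElm_i',\strpElm_{-i})\}.
\]
This inequality is exactly the definition of quasi-concavity.

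\textbf{Main obstacle.} The only delicate point is Step 2. The quantities $v_{\strpElm,\varphi}(A)$ with $i\notin A$ do not depend on $\strpElm_i$ at all, since $i$ is among the minimisers, so they are trivially constant and hence affine. The terms with $i\in A$ involve a minimum over $\strpElm_{-A}$, and such a minimum of affine functions is in general only concave, not affine. Condition 2) is what excludes this, and I will invoke it directly rather than derive it. I will also remark why the stronger hypothesis is needed. If one only knew each $v(J\cup\{i\})$ to be concave, the term $-\lambda c_J v(J\cup\{i\})$ would be convex when $\lambda>0$. Concavity, and even quasi-concavity, could then fail, so the linearity assumption is essential.
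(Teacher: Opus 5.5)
Your proposal is correct and follows the same route as the paper's proof. With $\strpElm_{-i}$ fixed, hypotheses 1) and 2) make the reward term and the Shapley-weighted responsibility term (whose coefficients do not depend on $\strpElm$) affine in $\strpElm_i$, and an affine function on the convex product of simplices is quasi-concave; you spell out the coefficient independence, the domain convexity and the level-set inequality, which the paper covers in one line.

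One small quibble concerns your side remark only. For fixed $\strpElm_{-A}$, the satisfaction probability is in general a polynomial in $\strpElm_i$ rather than an affine function (e.g.\ products of $i$'s action probabilities at successive states), so the minimum over $\strpElm_{-A}$ need not even be concave. This only strengthens your decision to take hypothesis 2) as given rather than derive it.
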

\begin{proof}
Under condition 1) and 2), both the expected reward and the BCR degree term are linear functions of $\strpElm_i$. 
A linear function is trivially quasi-concave (and also quasi-convex).
Since $\strpElm_{-i}$ is fixed, we are examining a function $u_i(\strpElm_i)$ over a convex domain (probability simplex), which is a linear combination of linear terms. 
Hence $u_i$ is quasi-concave in $\strpElm_i$ under these conditions.
\end{proof}

In our setting, quasi-concavity is evaluated with respect to an individual agent’s mixed strategy. To ensure this is well-defined, we fix the strategies of all other agents. This is a standard approach in both game-theoretic best-response analysis and parametric model checking \cite{ShohamL//:09,BassetKTW//:15}.
When both reward and responsibility allocation are linear in the agent's strategy, the resulting utility function is quasi-concave, enabling NE computation via convex optimisation.

In our PSMAS, strategies are symbolically encoded using parameters that represent the probabilities of actions. When the system and property satisfaction probability can be expressed as a linear function of these parameters, both the reward and responsibility expressions remain linear. Consequently, the symbolic utility is linear. We demonstrate that for outcomes expressible as either reachability or safety objectives, the quasi-concavity property is obtained.

\begin{proposition}
    In a game where the formula $\varphi$ is a reachability or safety formula, the quasi-concavity property (\Cref{prop:quasi}) holds.
\end{proposition}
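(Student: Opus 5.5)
The plan is to reduce the statement to \Cref{prop:quasi}: fix $\strpElm_{-i}$ (condition 3) and show that conditions 1) and 2) hold when $\varphi$ is a reachability formula $\F_{\le k}\,\phi$ (or $\phi \U_{\le k} \phi'$) or a safety formula $\G_{\le k}\,\phi$. The key observation is that with $\strpElm_{-i}$ fixed, the game $\Game_{\strpElm_{-i}}$ is an MDP for agent $i$. In the \psmas\ $\MMM$, agent $i$'s memoryless strategy is encoded by the per-state parameters $V^i_s$. The problem therefore becomes showing that the relevant quantities are linear, or at least affine, in these parameters.

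The first step is to show that the bounded reachability probability is affine in the parameters of each single state. I would unfold the standard backward recursion $p_j(s) = \sum_{\jact}\prod_{l}\sigma_l(s)(\jact_l)\,\sum_{s'}\Delta(s,\jact,s')\,p_{j-1}(s')$. At any fixed state $s$, agent $i$'s contribution enters only through the factor $\sigma_i(s)(\jact_i) = x_{\jact_i}$, so $p_j(s)$ is affine in $V^i_s$ once the continuation values are held fixed. Safety reduces to reachability by complementation: $\prob(\G_{\le k}\phi) = 1 - \prob(\F_{\le k}\neg\phi)$, and this preserves affinity. The expected cumulative reward is handled by the same recursion with $r_s, r_a$ added at each step, which gives condition 1).

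For condition 2), I would first take a coalition $A \ni i$. Then $v_{\strpElm,\varphi}(A)$ is a minimum over $\strpElm_{-A}$ of functions that are affine in agent $i$'s parameters. For $A \not\ni i$, $v_{\strpElm,\varphi}(A)$ does not depend on $\strpElm_i$ at all, since agent $i$ is among the minimisers; it is therefore constant, hence linear. With condition 2) established, $\DDD^i$ is a fixed linear combination of the values $v(\cdot)$, so $u^\varphi_i$ is a linear combination of linear terms and \Cref{prop:quasi} applies.

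The main obstacle is that the bounded probability is in fact \emph{multilinear} in agent $i$'s parameters across different states and time steps, not jointly linear. Moreover, the minimum over $\strpElm_{-A}$ of affine functions is concave, not affine. I would address this in two ways. First, I would argue via the parametric-Markov-chain view that memoryless reachability probabilities are multi-affine, and restrict attention to the per-state variables $V^i_s$ as the paper's ``linear in $\strpElm_i$'' suggests; this is the setting of the running example, where only $s^0$ is non-trivial. Second, for the $\min$, I would note that concavity of $v(A)$ for $A \ni i$ already suffices for the responsibility term. The coefficients of $v(J\cup\{i\})$ in $\DDD^i$ are positive and, as argued above, every $v(J)$ with $i \notin J$ is constant in $\strpElm_i$. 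Hence $\DDD^i$ is concave in $\strpElm_i$. For $\lambda \ge 0$ the term $-\lambda\DDD^i$ is then convex, which is the wrong direction for quasi-concavity. I therefore expect to need either $\lambda \le 0$ or the stated linearity assumption, where the minimiser is attained at a fixed deterministic vertex, to finish the argument.
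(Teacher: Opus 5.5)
Your proposal uses the same reduction as the paper: fix $\vec{\sigma}_{-i}$ and verify conditions 1) and 2) of \Cref{prop:quasi}. It does not prove the statement, though; it stops at the two obstacles you raise. Those obstacles are genuine, and no completion exists, because the statement is false without further hypotheses.

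Here is a counterexample. Two agents each choose $h$ or $t$ at $s^0$. Matching actions lead to an absorbing $\crash$-state and mismatching ones to an absorbing safe state. All rewards are $0$, and $\varphi=\F_{\le 1}\,\crash$. Fix $\sigma_2(s^0)=(\tfrac12,\tfrac12)$, let $x=\sigma_1(s^0)(h)$, and write $v$ for $v_{\vec{\sigma},\varphi}$. Then $v(\{1,2\})=v(\{2\})=\tfrac12$, $v(\varnothing)=0$, and $v(\{1\})=\min_y[xy+(1-x)(1-y)]=\min(x,1-x)$. Hence $\DDD^1(\vec{\sigma},\varphi)=\tfrac12\min(x,1-x)$, and condition 2) fails. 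For any $\lambda>0$ (the paper uses $\lambda=10$), $u^{\varphi}_1=-\tfrac{\lambda}{2}\min(x,1-x)$ is $0$ at $x=0$ and $x=1$ but $-\lambda/4$ at $x=\tfrac12$, so it is not quasi-concave. This is exactly your mechanism: a minimum of affine functions is concave, and $-\lambda\DDD^i$ then points the wrong way.

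The paper's proof gets past this point only by inferring linearity from the fact that pure memoryless strategies suffice for reachability and safety. That inference is invalid. Sufficiency of pure counter-strategies only makes $v(A)$ a minimum over finitely many of them. In the one-step case this is concave and piecewise affine in $\sigma_i$. It is affine only if a single pure counter-strategy is minimising for every $\sigma_i$, as braking is in the running example. The reward part rests on the same non sequitur. So your closing sentence names the right missing hypothesis, and your $\lambda\le 0$ remark is correct for one-step games.

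One correction to your first remedy. Memoryless bounded reachability is not multi-affine in the per-state parameters once a state can be revisited within the bound. The reason is that the continuation values you hold fixed in the recursion depend on those same parameters. If at $s$ one action loops and the other reaches the goal, the probability of reaching the goal within $k$ steps is $1-x^k$, where $x$ is the looping probability.

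Even genuine multi-affinity would not give quasi-concavity in $\sigma_i$. Suppose $i$'s choice at $s^0$ decides at which of two states it moves next. Then the goal probability $xy+(1-x)(1-z)$ equals $1$ at $(x,y,z)=(1,1,0)$ and at $(0,0,0)$, but $\tfrac34$ at their midpoint. Putting a unit reward on the goal state makes this break condition 1) as well.

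A correct version needs hypotheses along these lines:
\begin{itemize}
\item $i$'s parameters enter the probabilities and the reward affinely (e.g.\ $i$ acts at a single state before $\varphi$ is settled);
\item for each $A\ni i$, one pure counter-strategy of the agents outside $A$ is minimising uniformly in $\sigma_i$.
\end{itemize}
Under these, $v(A)$, $\DDD^i$ and $u^{\varphi}_i$ are all affine in $\sigma_i$, for every $\lambda$.
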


\begin{proof}
    For the responsibility component, note that when $\varphi$ is either \textit{reachability} or \textit{safety}, pure memoryless strategies are sufficient to satisfy it~\cite{chatterjee2012survey}. Thus, the probability of satisfying $\varphi$ can be expressed as a linear polynomial over strategy parameters $x_i$. Consequently, $v_{\strpElm,\varphi}(A)$ is linear in $\strpElm_i$, and thus $v_{\strpElm,\varphi}(A \cup \{i\}) - v_{\strpElm,\varphi}(A)$ is also linear in $\strpElm_i$. Therefore, $\DDD^i(\strpElm, \varphi)$ is linear.
    
    For the reward component, since pure memoryless strategies are sufficient, the reward can be represented by a linear function. This linearity condition also trivially extends to strategy profiles.
\end{proof}

In general, $u^{\varphi}_i$ is quasi-concave in $\strpElm_i$ when the agent's preference for mixing strategies leads to at least as good an outcome as playing pure strategies, or formally, when the level sets $\strpElm_i \mid u^{\varphi}_i(\strpElm_i, \strpElm_{-i}) \ge c$ are convex for every $c \in \RRR$ \cite{OsborneR//:94}. This result enables the use of convex optimisation (best response over pure strategies) to compute Nash equilibria when the assumptions hold. 

\paragraph{Computing bCR-aware utility.}
The computation of the \textit{responsibility-aware utility valuation function}, as given in \Cref{eq:utility-b}, for outcome $\psi$ 
involves: 
\begin{itemize}
\item [i)] calculating the expected rewards: 
$\EEE (\payoff^i( \{\rho \in \HIST{s}{\strpElm_A} \mid \rho \models_{\MMM} \psi\}) )$, and
\item [ii)] calculating the bCR degree: $\DDD^{i} (\strpElm_{A}, \psi)$.
\end{itemize}
Finally, computing the best response set $\BR_i(\strpElm_{\AG \setminus i})$ corresponds to finding the set of strategies $\sigma_i$ that maximise the utility as the NE.

\subsection{NE for Responsibility-aware Strategy}
\begin{definition}[NE for bCR-aware Strategy]
\label{def:br}
Given $\MMM$, for each agent $i$, and $\strpElm_{\AG\setminus i}$,  a strategy $\sigma_i$ is a \emph{best response} \wrt the utility function defined in (\ref{eq:utility-b}),  if it is the set:
\begin{align*} 
\BR_i(\strpElm_{\AG\setminus i}) \triangleq& \{ \sigma_i \mid \max_{\sigma_i}  (u^{\varphi}_i(\sigma_i, \strpElm_{\AG\setminus i}) ) \}
\end{align*}
A strategy profile $\strpElm$ is a mixed NE if it is a best response for every agent, meaning that $\strpElm \in \BR_i(\strpElm_{\AG \setminus i})$ holds for all $i \in \AG$ in $\MMM$.
\end{definition}

\begin{example}
\label{eg:utility}
Revisiting our running example, consider $ \psi = \X \ \crash$, $A_1$ and $A_2$ choose ``braking'' with probability $x_1$ and `$x_2$ respectively under $\strpElm$. Suppose that the payoffs for $A_1$ and $A_2$ ``braking'' yields 3 and 1 respectively, while ``not-braking'' results in a payoff of 2 and 3 respectively, 
the payoffs of reaching $s_1$ (crash) for $A_1$ and $A_2$ is $-3$ and $-1$ respectively.
The expected payoff can be computed as:
\[\sum_{\rho \in \HIST{s^0}{\strpElm}}\payoff^{1}(\rho) = x_1+0.4x_2-1.6x_1x_2-1\]
\[\sum_{\rho \in \HIST{s^0}{\strpElm}}\payoff^{2}(\rho) = -1.6x_1-2x_2+1.6x_1x_2+2\]
\[v_{\strpElm,\psi}(\{1,2\}) = v_{\strpElm,\psi}(\{2\}) = 0.2 x_1 (1-x_2)\]
\[v_{\strpElm,\psi}(\{1\}) = v_{\strpElm,\psi}(\varnothing) = 0.12 x_1 x_2\]
\[\DDD^{1} (\strpElm,\psi) = 0,\ \DDD^{2} (\strpElm,\psi) = 0.2 x_1 - 0.32 x_1x_2 \]
Consider an example form of the polynomial function: 
\[\sum_{\rho \in \HIST{s^0}{\strpElm}}\payoff^{i}(\rho)  - 10 \cdot \DDD^{i} (\strpElm,\psi)\] 
we obtain the utility function as:
\begin{align*}
u^{\psi}_{1}(\strpElm) &= x_1+0.4x_2-1.6x_1x_2-1\\ 
u^{\psi}_2(\strpElm) &= -3.6x_1-2x_2+4.8x_1x_2+2.
\end{align*}
\end{example}

\paragraph{Parametric best response expressions.}
When performing parametric model checking against \rpatl \ formulas to evaluate best responses, we obtain parametric expressions that encode optimal strategies. These expressions include parameters representing the probabilities of different actions for each agent. By varying these parameters, we can identify best-response strategies within the strategy profile space, \ie action probabilities that maximise each agent's utility. This approach enables a systematic exploration of how changes in action probabilities affect both the overall performance and allocation of responsibility within the strategy profile.

\paragraph{Finding the stable responsibility-aware strategy}
If the utility function $ u^{\varphi}_i $ is quasi-concave (more specifically linear and convex) with respect to agent $i$'s mixed strategies in a given model $\MMM$, then a mixed strategy profile $ \strpElm $ is a NE if and only if every pure strategy played with positive probability in $ \sigma_i $ is a best response to the mixed strategies of the other agents \cite{OsborneR//:94}. 
This implies that, in a mixed NE, an agent randomises only among actions that yield the same expected utility, \ie every action in the support of an agent's equilibrium mixed strategy must lead to an identical valuation:  
$u^{\varphi}_i(\strpElm_{-i}, \sigma_i) = u^{\varphi}_i(\strpElm_{-i}, \sigma_{i\prime})$.
For any two actions $ a, a' \in \ACT $ corresponding to $ \sigma_i $ and $ \sigma_{i\prime} $, provided they have nonzero probabilities.
This property forms the basis for expressing the NE condition:
\begin{equation}\label{eq:ne-poly}
\left\lbrace
\begin{array}{l}
u^{\varphi}_i(\strpElm_{-i}, \sigma_i)=u^{\varphi}_i(\strpElm_{-i}, \sigma_{i\prime}) \quad \forall \sigma_i,\sigma_{i\prime}\\ 
\sum^m_{j=1} \sigma_{ij}=1 \quad \forall i \\
0 \le \sigma_{ij} \le 1 \quad \forall i, j
\end{array}
\right.
\end{equation}
Solving these equations allows us to identify the NE and determine the optimal strategy profile for the given objective. This leads to the following theorem, whose proof follows from a reduction to the existential theory of real numbers~\cite{Canny//:88}.

\begin{theorem}
\label{theo:ne-complexity}
Computing NE strategy satisfying \rpatl \ formula $\varphi$ can be done in PSPACE.
\end{theorem}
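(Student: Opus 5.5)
We reduce the problem to a decision question in the existential theory of the reals (ETR), which Canny~\cite{Canny//:88} shows is decidable in PSPACE. The approach mirrors the two-step procedure described above: satisfaction filtering followed by equilibrium filtering. Both steps are expressed as a single existentially quantified first-order formula over the real parameters $V=\bigcup_i V^i$ of the \psmas\ $\MMM$, and that formula must have size polynomial in $\MMM$ (for a fixed number of agents, or with the explicit exponential-in-$|\AG|$ factor coming from the Shapley sum).

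\textbf{Step 1: encode strategies and the outcome.} First, we write the admissibility constraints (1)--(3) as polynomial (in)equalities over $V$. Next, we encode the probability of the $k$-bounded formula $\varphi$ under the profile. Since $\varphi$ is bounded, we introduce auxiliary real variables $p_{s,j}$ for $j\le k$ and constrain them by the Bellman-style recurrences of the induced Markov chain $\Game_{\strpElm}$. Each transition probability is a polynomial in $V$ (a product of $x_a$'s times $\delta$-entries), so this costs $O(k|S|)$ polynomial equations. Satisfaction of $\varphi$, in whatever threshold form is required, then becomes one polynomial constraint on $p_{s^0,0}$. Expected rewards are encoded analogously with variables $e_{s,j}$.

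\textbf{Step 2: encode $\DDD^i$ and utilities.} For each coalition $A$ we need $v_{\strpElm,\varphi}(A)$, a minimum over $\strpElm_{-A}$. A minimum cannot be expressed by an existential witness alone. We handle this using Lemma~\ref{lem:alg-1}: once $\strpElm_A$ is fixed, the value is an MDP minimum. Minimum bounded-horizon probabilities of an MDP satisfy the recurrence $q_{s,j}=\min_{\vec a_{-A}}\sum_{\vec a_A}\prod \sigma_i(s)(a_i)\sum_{s'}\Delta(s,\vec a,s')q_{s',j+1}$, where the minimum ranges over pure actions, because pure memoryless strategies suffice for the opponents in this setting. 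Each such finite minimum is expressible existentially: assert $q_{s,j}\le$ each term and $q_{s,j}=$ some term, using a finite disjunction. Summing the Shapley weights gives $\DDD^i$ as an ETR-definable term. From this we obtain $u_i^{\varphi}$ per Definition~\ref{def:utility}.

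\textbf{Step 3: encode NE.} We use the characterisation~\eqref{eq:ne-poly}. Under quasi-concavity (Proposition~\ref{prop:quasi} and the subsequent reachability/safety proposition), $\strpElm$ is a NE iff no pure deviation of any agent improves its utility. Deviations are finitely many memoryless pure strategies, and each is plugged into the same encoding with fresh copies of the auxiliary variables. The resulting constraints are $u_i(\strpElm)\ge u_i(\tau_i,\strpElm_{-i})$ for every pure $\tau_i$.

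\textbf{Main obstacle.} The number of pure memoryless deviations is exponential ($\prod_s|\ACT_i(s)|$), which would blow up the formula. To avoid this, I would replace the enumeration by per-state one-shot deviation conditions: a local optimality constraint at each state on auxiliary value variables, like a Bellman optimality equation for $i$'s best-response MDP. This keeps the formula polynomial. It is only sound if the utility, including the responsibility term, decomposes statewise, which is exactly what the linearity assumptions are meant to provide. The conjunction is then a polynomial-size ETR sentence, and its satisfiability, together with extraction of a witness by a PSPACE search on bit-approximations, yields the claimed PSPACE bound.
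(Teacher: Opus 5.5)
Your overall route is the paper's. Its proof is only the remark that satisfaction filtering plus the NE system \eqref{eq:ne-poly} is an instance of the existential theory of the reals, decided in PSPACE by Canny. That one line does not address the obstacle you identify, and your way of closing it does not work.

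\textbf{The deviation step.} Replacing the exponentially many pure deviations by per-state, Bellman-style conditions requires agent $i$'s best-response problem to be a single MDP objective, and $u^{\varphi}_i(\cdot,\strpElm_{-i})$ is not one. The reward and $v_{\strpElm,\varphi}(\AG)$ are evaluated against $\strpElm_{-i}$. By contrast, each $v_{\strpElm,\varphi}(J\cup\{i\})$ with $J\subsetneq\AG\setminus\{i\}$ is a minimum over the strategies of a \emph{different} coalition. So a best response is one memoryless $\sigma_i$ judged simultaneously against up to $2^{|\AG|-1}$ different environments, and, for a $k$-bounded $\varphi$, at every time step of each. This breaks both versions of your fix:
\begin{enumerate}
\item Bellman inequalities on auxiliary values let $i$ act differently in each copy and at each step. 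They therefore bound the best response over a larger strategy class and can reject genuine equilibria.
\item Testing single-state switches of $u^{\varphi}_i$ directly is unsound, because per-state optimality of a multi-affine function on a product of simplices is not global. For example, $f(x,y)=\tfrac12(1-x)+xy$ cannot be improved from $(0,0)$ by moving one coordinate, yet $f(1,1)=1$.
\end{enumerate}
Joint affinity of $u^{\varphi}_i$ in all of $i$'s parameters would make the best response separable across states. However, it is not a hypothesis of the theorem, and it fails as soon as a path crosses two states where $i$ randomises. Also, restricting to pure deviations is licensed by quasi-convexity or linearity, not by the quasi-concavity of Proposition~\ref{prop:quasi}. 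Without this step your sentence has exponentially many deviation constraints, and Canny's bound then gives only EXPSPACE.

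\textbf{The encoding of $v$ in Step 2.} Your $j$-indexed recurrence lets the minimising coalition choose its action as a function of the step. It therefore computes the minimum over step-counting strategies, whereas Definition~\ref{def:bcr-degree} minimises over memoryless $\strpElm_{-A}$, and for bounded until the two differ. Suppose at $s^0$ action $a$ leads to $\mathit{bad}$ or back to $s^0$ with probability $\tfrac12$ each, and action $b$ leads to a state whose only successor is $\mathit{bad}$. For $\F_{\le 2}\,\mathit{bad}$ your recurrence returns $\tfrac12$ (play $a$, then $b$). A memoryless minimiser playing $a$ with probability $x$ gets $1-\tfrac{x}{2}+\tfrac{x^2}{4}\ge\tfrac34$. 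Lemma~\ref{lem:alg-1} makes the same identification via PCTL model checking, so it cannot justify the recurrence.

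\textbf{What a faithful encoding needs.} Encoding the memoryless minima and the deviations faithfully requires universally quantified strategy vectors. That is a sentence with a fixed number of quantifier alternations, so Canny's ETR result alone no longer suffices. You would need a decision procedure for bounded alternation, such as Renegar's, which still runs in polynomial space.

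\textbf{Dependence on the number of agents.} The Shapley sum needs a separate copy of the minimisation for each of the $2^{|\AG|-1}$ coalitions. Because $\strpElm$ is now a variable, the space reuse of Lemma~\ref{lem:alg-1} is unavailable. As your own hedge concedes, the sentence is polynomial only for a fixed number of agents, which is weaker than the statement claims.
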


\begin{example}
\label{eg:ne}
Revisiting \Cref{eg:utility}, by solving the NE condition equations, we have: $x_1=0.42$, $x_2=0.625$.
\end{example}
\section{Conclusion and Future Work}
\label{sec:concl}
  
We introduce a formal definition of backward (counterfactual) responsibility allocation and provide a framework to model and reason about it. Our definition of responsibility allocation satisfies desirable properties such as fairness and consistency. Utilising this definition, we propose \rpatl\ as a language to strategically reason about responsibility allocation, reward, and temporal objectives in probabilistic multi-agent systems. We demonstrate that the model checking problem is in PSPACE, thus no more complex than rPATL as proposed in \cite{KwiatkowskaGPS//:19}. Furthermore, we present an approach to calculate NE strategy profiles, where players' utility functions consider both reward and responsibility allocation. These strategy profiles correspond to stable behaviour in the game and offer a method to determine how agents should or would choose their strategies. This corresponds to game theory's \textit{normative} (how agents should/ought to act) and \textit{descriptive} (how agents would act) interpretations \cite{wooldridge2012does}. Both interpretations are valuable for the analysis, verification, and design of responsibility-aware MASs.

As discussed in \cite{kwiatkowska2018automated}, bounded temporal properties may require memory. Therefore, a potential future direction is to generalise our approach to \textit{finite-memory} agents. However, this would necessitate a different approach, as using a parametric model would no longer suffice. Another intriguing future direction is to incorporate responsibility allocation into \textit{normative systems}~\cite{aagotnes2007normative}, where norms can be introduced to achieve desirable outcomes from the designer's perspective~\cite{perelli2019enforcing}. Additionally, when integrating learning components into the systems, such as in a \textit{multi-agent reinforcement learning} setting, it is worth exploring how to incorporate responsibility allocation with rewards. This could involve designing \textit{reward structures} or \textit{reward machines}~\cite{icarte2022reward} to achieve desirable outcomes.

\bibliographystyle{splncs03} 
\bibliography{BIB-resp}

\end{document}